\numberwithin{equation}{section}
\theoremstyle{plain}
\newtheorem{theorem}[equation]{Theorem}
\newtheorem{lem}[equation]{Lemma}
\theoremstyle{definition}
\theoremstyle{remark}
\newtheorem{remark}[equation]{Remark}
\newtheorem{rem}[equation]{Remark}
\newcommand{\R}{\mathbb{R}}
\newcommand{\cH}{\mathcal H}
\newcommand{\wt}{\widetilde}
\newcommand{\cP}{{\mathcal P}}
\newcommand{\1}{{\mathbf 1}}
\newcommand{\Z}{{\mathbb{Z}}}
\newcommand{\N}{\mathbb{N}}
\newcommand{\ipc}[2]{\left \langle #1 , \ #2 \right \rangle }
\begin{document}

\title[Localization landscape]{The exponential decay of eigenfunctions for tight binding Hamiltonians via landscape and dual landscape functions}

\author[W. Wang]{Wei Wang}
\email[]{wang9585@umn.edu}
\address{School of Mathematics, University of Minnesota, Minneapolis, MN, USA}

\author[S. Zhang]{Shiwen Zhang}
\email[]{zhan7294@umn.edu}
\address{School of Mathematics, University of Minnesota, Minneapolis, MN, USA}

\newcommand{\Addresses}{{
  \bigskip
  \vskip 0.08in \noindent --------------------------------------
\vskip 0.10in

  \footnotesize

\medskip

}}
\maketitle

\begin{abstract}
We consider the discrete Schr\"odinger operator $H=-\Delta+V$ on a cube $M\subset \Z^d$, with periodic or Dirichlet (simple) boundary conditions. We use a hidden {\em landscape} function $u$, defined as the solution of an inhomogeneous boundary problem with uniform right-hand side for $H$, to predict the location of the localized eigenfunctions of $H$. Explicit bounds on the exponential decay of Agmon type for low energy modes are obtained. This extends the recent work of Agmon type  of localization in \cite{ADFJM-CPDE} for $\R^d$ to a tight-binding Hamiltonian on $\Z^d$ lattice. Contrary to the continuous case, high energy modes are as localized as the low energy ones in discrete lattices. We show that exponential decay estimates of Agmon type also appear near the top of the spectrum, where the location of the localized eigenfunctions is predicted by a different landscape function. Our results are deterministic and are independent of the size of the cube. We also provide numerical experiments to confirm the  conditional results effectively, for some random potentials. 
 \end{abstract}

\tableofcontents

\section{Introduction and main results}\label{sec:intro}
Localization of eigenfunctions is one the most important topics in mathematics and condensed matter physics. The term {\it localization}, roughly speaking, refers to the phenomenon that the eigenfunctions of  an elliptic operator concentrate on a narrow region in space and are (exponentially) small outside the region. Take the one-electron model of condensed matter physics for example,  for which the spectral and transport properties of the material are described by a (one-particle) Schr\"odinger operator $H=-\Delta+V$. The Laplacian $\Delta$ describes the kinetic energy of a free particle, and the potential $V$ the presence of the external field. For example, the choice of a periodic function $V$ can be used to describe a perfect crystal.  The pioneering work of P.W. Anderson \cite{An} back to 1958, says that randomness causes the states to (exponentially) localize 
due to the disorder of the background media. In the past several decades, the localization in a disordered system has attracted a lot of interests, \cite{Ab}. There are tremendously many beautiful results, e.g. \cite{KS,FS,CKM,AM,ASFH,EA,DS} in the discrete setting, \cite{FK1,FK2,DSS,BK} in the continuum setting, which are far from a complete list.

In 2012, a new theory was proposed by Filoche and Mayboroda \cite{FM-PNAS} to study the 
location of localized state. They introduced the concept of the landscape, which is the solution $u$ to $Hu=1$ for a Schr\"odinger operator $H=\Delta+V$ on a finite domain with appropriate boundary conditions. The landscape function has remarkable power in studying the eigenvalue problems of the original Hamiltonian. Many beautiful results were obtained based on this simple landscape function, both in mathematics and  physics, \cite{St, LS,PRL-3,DFM}. Recently, in \cite{ADFJM-PRL,ADFJM-CPDE,ADFJM-SIAM}, the authors propose a new framework by viewing $1/u$ as an effective potential.  In their work, the barrier of the effective potential  revealed detailed location and shape of the eigenfunctions. There are both rigorous proof of the exponential decay estimates of Agmon type in \cite{ADFJM-CPDE}, as well as systematical numerical results for studying eigenvalue problems (localization subregions, eigenvalue counting functions) using the effective potentials in physics \cite{ADFJM-PRL} and in computational mathematics \cite{ADFJM-SIAM}. The latter two papers contain numerous beautiful conjectures. Only a few of them have been proved rigorously.   

The quantum state of a particle can be described by a wave function in $L^2(\R^d)$, where most of the previous work on landscape theory focuses on. In the tight-binding approximation, only nearest-neighbor hopping is allowed for electron-electron interaction, which leads to a one-particle Hamiltonian in the form of a discrete Schr\"odinger operator on $\ell^2(\Z^d)$, see e.g. \cite{AW,Kir}. In this work, we consider a tight-binding Schr\"odinger operator $H=-\Delta+V$ on a finite box $M$ in $\Z^d$ for any $d$, with periodic, or Dirichlet (simple) boundary conditions. Previously in \cite{LMF}, the authors showed that the exact same landscape theory in the continuum setting  \cite{FM-PNAS} can be extended to  a tight-binding Hamiltonian defined on a discrete $\Z^1$ lattice, with the Dirichlet boundary condition.  We show in this work that not only the discrete analogue of the landscape function $\{u_n\}_{n\in M}$ exists in this more general setting, but more importantly, the effective potential $1/u_n$ can be used to obtain exponential  decay estimates of Agmon type for the eigenfunctions of $H$. The latter extends the Agmon type of localization results in \cite{ADFJM-CPDE} to the discrete lattice.

  We will describe and prove our main results for the periodic lattice first. The Dirichlet boundary will be discussed in Section \ref{sec:diri} later.    Let us introduce the following notations in order to state our main results. More precise settings of tight-binding Hamiltonians and landscape theory  will be given in Section \ref{sec:pre} and \ref{sec:LL}.  Let $M= \Z^d/K\Z^d\cong \{\bar 1,\cdots,\bar K\}^d$ be an integer torus, where $K\in\N$ and  $\bar k$ is the congruence class, modulo $K$. For simplicity, we will omit the overhead bar of $\bar k$ frequently whenever it is clear.   Let $V=\{v_n\}_{n\in {M}}$ be a potential satisfying $0\le v_n \le  V_{\max}$ and not identically $0$, for some $ V_{\max}>0$ (the strength of the potential).
We consider a tight binding Hamiltonian $H $, acting on $\ell^2({M})$ defined as:
\begin{align}
    (H \varphi)_n&=-\sum_{|m-n|_1=1}\left(\varphi_m-\varphi_n\right)\,+ v_n \varphi_n,\ \ n\in {M} \label{eq:opH-intro}
\end{align}
where $|n|_1:=\sum_{i=1}^d|n_i|$ is the $1$-norm on $\Z^d/K\Z^d$. We may think of $\varphi$ either as a periodic sequence $\varphi_n$ or as a periodic function $\varphi(n)$ indexed by $\Z^d$.  
We are interested in the inhomogeneous equation, with uniform right-hand side, $(H  u)_n=1,n\in M$. The solution,  $u=\{u_n\}$ which is the so-called {\it landscape function}, is the discrete analogue of the landscape function introduced in \cite{FM-PNAS}, and will serve as a new tool to study localization of the eigenvalue problems of the original Schr\"odinger operator $H $. 

The first result of this paper is the exponential decay estimates of Agmon type for $H $. 
\begin{theorem}[Theorem \ref{thm:Agmon}]\label{thm:Agmon-intro}
Suppose $K\ge 3,K\in\N$, $H \varphi=\mu\varphi$ on $\ell^2({M})$. For any $\delta>0$, and $\mu\le  V_{\max}-\delta$, let $h_n$ be the Agmon weight, defined as in \eqref{eq:Agmon-dis}, associated to $\mu,\delta$ and $u_n$ on ${M}$. Then
\begin{align}\label{eq:Agmon-loc-intro}
    \sum_{h_n\ge 1}e^{c_1 h_n}\varphi_n^2\, \le\, \frac{C_0 }{\delta}\,   \sum_{ {n\in {M}} }\varphi_n^2
\end{align}
where constant $c_1>0$ only depends on the dimension $d$, and constant $C_0>0$ only depends on $d$ and $V_{\max}$. 
\end{theorem}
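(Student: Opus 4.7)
The plan is to adapt the Agmon exponential decay method to the discrete setting via the landscape substitution, following the continuum argument of \cite{ADFJM-CPDE}. The starting point is to factor $\varphi_n = u_n \psi_n$; since $(Hu)_n = 1$, summation by parts against $H\varphi = \mu \varphi$ should yield the ground-state representation
\begin{equation*}
\tfrac{1}{2}\sum_{|n-m|_1=1} u_n u_m (\psi_n - \psi_m)^2 \,+\, \sum_{n\in M} \left(\tfrac{1}{u_n} - \mu\right) \varphi_n^2 \;=\; 0,
\end{equation*}
which exhibits $1/u_n$ as the effective potential driving the localization of $\varphi$. This is a discrete analogue of the Filoche--Mayboroda identity, and everything rests on it.

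I would then test the eigenvalue equation against the weighted function $e^{2h_n}\varphi_n$ and carry out summation by parts again. After reorganizing, this should produce a discrete Agmon inequality of the form
\begin{equation*}
\sum_{|n-m|_1=1} u_n u_m (e^{h_n}\psi_n - e^{h_m}\psi_m)^2 \,+\, \sum_n \left(\tfrac{1}{u_n} - \mu - \eta_n\right) e^{2h_n}\varphi_n^2 \;\le\; 0,
\end{equation*}
where $\eta_n$ is a discrete-gradient correction of order $\max_{|m-n|_1=1}(e^{h_m-h_n}-1)^2$, coming from commuting $e^h$ past the discrete Laplacian. The Agmon weight $h_n$ from \eqref{eq:Agmon-dis} is then designed so that on the forbidden region $\{1/u_n - \mu \ge \delta\}$ one has $1/u_n - \mu - \eta_n \ge c\,\delta$, while on the allowed region $h_n$ is capped at a fixed $O(1)$ value. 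Partitioning the index set according to $\{h_n \ge 1\}$ versus its complement, and moving the (possibly negative) contributions from $\{h_n < 1\}$ to the right-hand side, one absorbs them into $\sup_{h_n < 1} e^{2h_n}\cdot \sum_n \varphi_n^2$. This is precisely how the factor $C_0/\delta$ will enter, with $C_0$ dictated by $V_{\max}$ through the crude bound $u_n \ge c(V_{\max})$.

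The main obstacle is the absence of a chain rule for the discrete Laplacian: the continuum identity $|\nabla(e^h\psi)|^2 = e^{2h}\bigl(|\nabla\psi|^2 + 2\psi\,\nabla\psi\cdot\nabla h + \psi^2|\nabla h|^2\bigr)$ holds only up to error terms whose size is controlled by higher powers of the increments $|h_n - h_m|$. Handling these requires that the Lipschitz constant of $h$ across neighboring sites be a priori bounded by a number depending only on the dimension $d$; this is why one builds $h_n$ as a truncated, slope-limited Agmon distance and why the exponent $c_1$ in the final estimate depends only on $d$. The second technical point is to identify the correct discrete product rule for $e^{h}(u\psi)$ so that the landscape term $(1/u_n - \mu)$ survives intact after the weighted summation by parts; once that algebraic rearrangement is in place, the estimate should follow by collecting terms and applying the partition argument outlined above.
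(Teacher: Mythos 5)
Your proposal follows essentially the same route as the paper: the ground-state representation you write down is exactly the paper's landscape conjugation identity (Lemma \ref{lem:eff} applied to the eigenpair, i.e.\ \eqref{eq:eff-eigen} with $g\equiv 1$), the weighted test function $e^{2h_n}\varphi_n$ is the paper's choice $g^2\varphi$ with $g\approx e^{\alpha h}$, your correction term $\eta_n$ is the paper's right-hand side $\sum\varphi_{n+e_i}\varphi_n(g_{n+e_i}-g_n)^2$ controlled by Lemma \ref{lem:Agmon-gradient}, and the slope limitation you correctly identify as the key discrete obstacle is built into the $\ln\bigl(1+\sqrt{w_\mu}\bigr)$ definition of the Agmon metric together with the smallness $\alpha<1/\sqrt{Cd}$. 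The partition into $\{h_n\ge 1\}$ and its complement, the absorption of the gradient error into the left-hand side, and the appearance of $C_0/\delta$ via $h_n\ge 1\Rightarrow 1/u_n>\mu+\delta$ and $u_n\ge 1/V_{\max}$ all match the paper's argument.
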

\begin{remark}
The dimensional constants $c_1\sim 1/\sqrt{d}$, $C_0\sim d(V_{\max}+d)$ explicitly, and are independent of $K$. The Agmon weight $h_n$ is defined explicitly through $\mu,\delta$ and $u_n$. All these will be specified in Theorem \ref{thm:Agmon}.
\end{remark}
\begin{rem}
We are most interested in the so-called ``thermodynamic
limit'': We restrict the system to a finite but large cube, and study quantities of interest in this finite system. Finally, we let the cube grow indefinitely and hope that the quantity under consideration has a limit. Though we did not really consider ``thermodynamic
limit'' as $K\to \infty$ in rigor, it is crucial that our main results are independent of the size $K$ of the cube.  
\end{rem}

The work \cite{LMF} for the discrete $\Z^1$ lattice was done before \cite{ADFJM-PRL,ADFJM-CPDE,ADFJM-SIAM}. Some of the new framework regarding the effective potential $1/u_n$ was not discussed in \cite{LMF}, especially for the decay estimate of Agmon type that we obtained. The first motivation of the current work is to  have a relatively complete picture of the landscape theory for tight-binding Hamiltonians on $\Z^d$ for any $d$, see e.g. Theorem \ref{thm:landscape}. Based on the discrete version of the landscape theory, we obtained the above exponential decay estimates. Some of the ground truth about the discrete landscape function  will be used  in our continuation work for the landscape law for the integrated density of states on $\Z^d$ (in preparation).  

It is well known that all the eigenvalues of the discrete operator $H$ in \eqref{eq:opH-intro} are contained in $[0,4d+V_{\max}]$, for any $K$.  Similar to the continuous case, Theorem \ref{thm:Agmon-intro} works effectively in the regime $[0,V_{\max}-\delta]$ near the bottom of the spectrum.  Contrary to the continuous case, two different types of localization and landscape occur on the discrete lattice. One is for low energy states near the bottom. The other one is for high energy states near the top of the spectrum, using a different, so-called, dual landscape. The dual landscape was first discovered and studied in \cite{LMF}. The authors there showed that, in the discrete $1$-d model, similar to the low energy state, the localization of the high energy modes can be governed by a different landscape function. In the current work, we study further this concept of the dual landscape, as well as the Agmon type of localization near the top of the spectrum, in more general situations. For $H$ in \eqref{eq:opH-intro} and any even integer $K\ge3$, there is a unique dual operator $\wt H$ acting on $\ell^2(M)$ as
\begin{align}
    (\wt H \varphi)_n&=-\sum_{|m-n|_1=1}\left(\varphi_m-\varphi_n\right)\,+ (V_{\max}-v_n) \varphi_n,\ \ n\in {M} \label{eq:opH-dual-intro}
\end{align}
Consider a dual landscape equation $(\wt H \wt u)_n=1$ on $\ell^2({M})$ ($\wt u$ will be specified later in Theorem \ref{thm:Agmon-dual}).  One can obtain the exponential decay of the high energy modes of $H $, as a direct consequence of Theorem \ref{thm:Agmon-intro} and the dual landscape $\wt u_n$ for $\wt H $. 
\begin{theorem}[Theorem \ref{thm:Agmon-dual}]\label{thm:Agmon-dual-intro}
Suppose $K\ge 3$ is an even integer, and $H\varphi=\mu\varphi$ on $\ell^2({M})$.  For any $\delta>0$, and  $\mu\ge 4d+\delta$, let $\wt h_n$ be the Agmon weight associated to $\mu,\delta$ and $\wt u_n$ on ${M}$ (see definition  in \eqref{eq:Agmon-dis-dual}). Then
\begin{align}\label{eq:Agmon-loc-top-intro}
    \sum_{\wt h_n\ge 1}e^{c_1 \wt h_n}\varphi_n^2\, \le\, \frac{C_0} {\delta}\,    \sum_{ {n\in {M}} }\varphi_n^2
\end{align}
where $c_1,C_0$ are the same constants as in Theorem \ref{thm:Agmon-intro}. 
\end{theorem}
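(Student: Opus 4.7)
The plan is to deduce Theorem \ref{thm:Agmon-dual-intro} from Theorem \ref{thm:Agmon-intro} via a chiral (staircase) conjugation on the bipartite torus. The torus $M=\Z^d/K\Z^d$ has a natural two-sublattice decomposition according to the parity of $|n|_1$, provided $K$ is even; this is precisely where the hypothesis on $K$ enters. Define the unitary involution $U\colon\ell^2(M)\to\ell^2(M)$ by $(U\varphi)_n=(-1)^{|n|_1}\varphi_n$. Because $K$ is even, the sign $(-1)^{|n|_1}$ descends to a well-defined function on the torus, and $U^2=I$.

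The heart of the reduction is to show that $U$ intertwines $H$ (near the top of the spectrum) with $\wt H$ (near the bottom). Every nearest neighbour $m$ of $n$ flips parity, so $(-1)^{|m|_1}=-(-1)^{|n|_1}$. A one-line computation then gives
\begin{equation*}
(-\Delta \, U\varphi)_n \;=\; (-1)^{|n|_1}\bigl(4d\,\varphi_n-(-\Delta\varphi)_n\bigr),
\end{equation*}
i.e.\ $U(-\Delta)U=4d\,I-(-\Delta)$. Since $U$ commutes with multiplication operators, $UHU=4d\,I-(-\Delta)+V=(4d+V_{\max})I-\wt H$. Consequently, if $H\varphi=\mu\varphi$ with $\mu\ge 4d+\delta$, then $\psi:=U\varphi$ satisfies $\wt H\psi=\wt\mu\,\psi$ with
\begin{equation*}
\wt\mu \;:=\; 4d+V_{\max}-\mu \;\le\; V_{\max}-\delta,
\end{equation*}
and $|\psi_n|=|\varphi_n|$ pointwise on $M$.

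We now apply Theorem \ref{thm:Agmon-intro} to $\wt H$, whose potential $V_{\max}-v_n$ again lies in $[0,V_{\max}]$. Let $\wt u$ solve $\wt H\wt u=1$; by construction the Agmon weight $\wt h_n$ in \eqref{eq:Agmon-dis-dual} is exactly the weight produced by Theorem \ref{thm:Agmon-intro} from the data $(\wt\mu,\delta,\wt u_n)$ on $M$. Since $\wt\mu\le V_{\max}-\delta$, Theorem \ref{thm:Agmon-intro} yields
\begin{equation*}
\sum_{\wt h_n\ge 1} e^{c_1\wt h_n}\psi_n^2 \;\le\; \frac{C_0}{\delta}\sum_{n\in M}\psi_n^2,
\end{equation*}
with the same constants $c_1,C_0$. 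Substituting $\psi_n^2=\varphi_n^2$ delivers \eqref{eq:Agmon-loc-top-intro} verbatim.

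The argument is a direct reduction, so there is no genuine analytic obstacle beyond Theorem \ref{thm:Agmon-intro} itself. The one point that must be checked carefully is bookkeeping: that $\wt h_n$ as defined in \eqref{eq:Agmon-dis-dual} really is the Theorem \ref{thm:Agmon-intro} weight formed from the transformed eigenvalue $\wt\mu=4d+V_{\max}-\mu$ (and not from $\mu$ itself); once this matches, the constants $c_1,C_0$ carry over without change. The evenness of $K$ is used only to make $U$ well-defined, but it is essential: on an odd torus the Cayley graph contains odd cycles and no such bipartite/chiral symmetry exists, which is why the theorem restricts to even $K$.
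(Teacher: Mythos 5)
Your proposal is correct and follows essentially the same route as the paper: the paper likewise conjugates by the sign $(-1)^{s(n)}$ with $s(n)=\sum_j n_j$ (which agrees with your $(-1)^{|n|_1}$ since $(-1)^{|k|}=(-1)^{k}$), derives $\wt H\wt\varphi=(4d+V_{\max}-\mu)\wt\varphi$, uses evenness of $K$ exactly to preserve the periodic boundary condition, and then applies Theorem \ref{thm:Agmon-intro} to $\wt H$ with the dual landscape $\wt u$ and $\wt\mu\le V_{\max}-\delta$. Your bookkeeping of the weight $\wt h_n$ and the constants matches the paper's argument.
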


Similar to the continuous case, the discrete effective potential $1/u_n$ is a deterministic tool. The localization results of Agmon type  through the effective potential can be applied to the random media, while the estimates only guarantee decay of the eigenfunction insofar as the effective weight functions $h_n$ grow. 
 We can not use the effective potential to prove Anderson localization on $\Z^d$ for a random potential. The landscape theory provides a new framework to detect the geometry  of disordered materials. It is a complement of the probabilistic methods for studying Anderson model and random media. In Section \ref{sec:num}, we will see numerical experiments on the behavior of the landscape function, and the effective weights, in random media. 
 Numerical results were obtained on predicting the localization sub-regions using  
  the level sets of $1/u_n$ and $1/\wt u_n$, for a tight-binding Anderson model.  The numerical part is a companion of our proof for $\Z^d$, as well as a parallel work of the computations for $\R^d$ in \cite{ADFJM-SIAM}, which will be discussed in detail in Section \ref{sec:num}.
  
  Our work on $\Z^d$ is also greatly inspired by the recent study in physics of many-body localization landscape \cite{BLG} by  Balasubramanian, Liao and  Galitski. In \cite{BLG}, the localization landscape for the single-particle Schr\"odinger operator was generalized to a wide class of interacting many-body Hamiltonians. \cite{BLG}  introduced the so-called many-body localization landscape on a discrete graph in the Fock space, and obtained bounds on the exponential decay of the many-body wave-functions in the Fock space. Our choice of Agmon metric on the $\Z^d$ lattice is similar to the construction in \cite{BLG} on the general Fock-state graph. The Fock-state model is essentially derived from a many-body spin chain model on the physical lattice. To the best of our knowledge, so far, the localization landscape in the Fock space is not able to show any decay estimates on the original spin chain. This is another motivation for us to study a tight-binding model on $\Z^d$, which is a special case of the general Fock-state model.  Our ultimate goal is to study the many-body spin chains in the physical space, using the localization landscape on $\Z^d$. We refer readers to e.g. \cite{ANSS} and references therein, for more detail about spin-chain models and their relation to tight-binding Schr\"odinger operators. 

The paper is organized as follows. In Section \ref{sec:pre} and \ref{sec:LL}, we state  preliminaries for  tight-binding Hamiltonians and the landscape theory for the  periodic boundary condition.  We prove the exponential decay estimates of  Agmon type near the bottom of the spectrum in Section \ref{sec:pfAgmon}. In Section \ref{sec:dual-LL}, we discuss the dual landscape and the exponential decay estimates for high energy modes. In Section \ref{sec:diri}, we describe how the main results can be generalized also to the Dirichlet (simple) boundary condition. In the last section, we discuss more numerical results of the landscape theory on discrete lattices.  \\

\noindent{\bf Acknowledgments.} 
We thank Professor Svitlana Mayboroda for suggesting to us the problem of studying localization landscape for tight-binding Hamiltonians. We appreciate
her support from the very beginning of this project and for many useful suggestions and
comments. We thank Professor Douglas Arnold for reading our early manuscript and many useful discussions. The paper is also partially motivated by the recent work \cite{BLG} of  Balasubramanian, Liao, and Galitski on Many-Body Localization Landscape. We would like to thank these authors for sharing us the early manuscript of their paper. We would also like to thank Li Chen for many useful discussions. 

Wei Wang is partially supported by the NSF grant DMS-1719694 and Simons Foundation grant 601937, DNA. Shiwen Zhang is supported in part by the NSF grants DMS-1344235, DMS-1839077,  and Simons Foundation grant 563916, SM. 


\section{Exponential decay of eigenfunctions of Agmon type}\label{sec:Agmon}
\subsection{Preliminaries}\label{sec:pre}
In the tight-binding model, the Hilbert space is taken as the sequence space 
$
    \ell^2(\Z^d)=\{\, (\phi_i)_{i\in\Z^d}\, |\, \sum_{i\in\Z^d}|\phi_i|^2<\infty\, \}
$
where we may think of $\phi=(\phi_i)_{i\in\Z^d}$ either as a function $\phi(i)$ on $\Z^d$ or as a sequence $\phi_n$ indexed by $i\in\Z^d$. It is convenient to equip  on $\Z^d$ the $1$-norm: $ |n|_1:=\sum_{i=1}^d|n_i|$, which reflects the graph structure of $\Z^d$. Two vertices $m=(m_1,\cdots, m_d), n=(n_1,\cdots, n_d)\in \Z^d$ are called nearest neighbors,  if $|m-n|_1=1$. We also say the nearest neighbors $m,n$ are connected by an edge of the discrete graph $\Z^d$. We denote by $e_i=(0,\cdots, 0,1, 0,\cdots, 0),i=1,\cdots, d$ the standard basis vector of $\Z^d$. The discrete (graph) Laplacian $\Delta$ on $\Z^d$ is defined as usual, acting on $\phi=\{\phi_n\}_{n\in \Z^d}$, 
\begin{align}\label{eq:Lap}
    (\Delta\phi)_n=\sum_{|m-n|_1=1}\left(\phi_m-\phi_n\right)=&\sum_{|m-n|_1=1}\phi_m\, -\, 2d\phi_n \\
    =&\sum_{1\le i\le d}\left(\phi_{n+e_i}+\phi_{n-e_i}\right)\, -\, 2d\phi_n ,\ \ n\in\Z^d .\nonumber 
\end{align}
For a real sequence $(v_n)_{n\in\Z^d}$ on $\Z^d$, the potential $V$ is a multiplication operator acting on $\phi\in\ell^2(\Z^d)$ as $(V\phi)_n=v_n\phi_n$. $H=-\Delta+V$ is usually called the discrete Schr\"odinger operator on $\Z^d$. If one takes $v_n=v_n(\omega)$  as independent, identically distributed random variables (in some probability space),  the random operator $H(\omega)=-\Delta+V(\omega)$ is usually referred as the Anderson model. We refer readers to the lecture notes \cite{Kir} for more details and a complete introduction to tight-binding Hamiltonians and the Anderson model. 

Next, we consider the discrete Schr\"odinger operator $H$ restricted on a finite cube (box) in $\Z^d$.  Let $\Lambda=\Z/K\Z\cong\{\bar 1,\bar 2,\cdots,\bar K\}$, where $\bar k$, is the congruence class in $\Z$, modulo $K$. We will omit the overhead bar and write $k=\bar k$ for simplicity whenever it is clear.  Let ${M}=\Lambda \times \Lambda \times \cdots \times \Lambda \cong \Z^d/K\Z^d$ be the $d$-tuple of $\Lambda$. We may abuse the notation and denote by $|\cdot|_1$ the induced 1-norm of $\Z^d$ on the congruence class $M$, where for example we consider two points $(1,n_2,\cdots,n_d)$ and $(K,n_2,\cdots,n_d)$ are closest neighbors and have distance 1 to each other in $M$.  From now on, we will concentrate on the finite dimensional subspace $\ell^2({M})$ of $\ell^2(\Z^d)$. For simplicity, we will denote by $\cH:= \ell^2({M})\cong \R^{dK}$. $\cH$ is equipped with the usual inner product on $\R^{dK}$, which we denote by $\ipc{\cdot}{\cdot}_{\cH}$.
It is easy to check that for $\phi \in \ell^2(M)$
\begin{align}\label{eq:bdry-peri}
     \phi_{n}=\phi_{n+Ke_i},\ n\in M,\ i=1\cdots,d 
\end{align}
which is equivalent to say that $\phi$ satisfies the periodic boundary condition. The restriction of $V$ on ${M}$ is $V_{{M}}=\{v_n\}_{n\in {M}}$. We omit the sub index ${M}$ and still denote it by $V$. We also assume that  $0\le v_n \le  V_{\max}$ and is not identically $0$, for some $ V_{\max}>0$. 
 
Our main interest will be the following  tight binding Hamiltonian $H =H _{{M}}
=-\Delta+ V$ acting on $\ell^2({M})$ as:
\begin{align}\label{eq:opH}
    (H \phi)_n=(-\Delta \phi)_n+(V\phi)_n=-\left(\sum_{|m-n|_1=1}\phi_m\right)\, +2d\phi_n+ v_n \phi_n,\ \ n\in {M} 
\end{align}

   For the self-adjoint operator $H  $, we say $(\mu,\varphi)$ is an eigenpair of  $H  $ if $H  \varphi=\mu \varphi$ for some $\mu\ge 0$ and $\varphi \in \cH$. It is easy to check that all eigenvalues of $H$ in \eqref{eq:opH} are contained in $[0,4d+V_{\max}]$ for any $K$.

In general, Schr\"odinger operators $H=-\Delta+V$ acting on a finite dimensional space are essentially matrices. Take $\Z^1$ for example, it is easy to check that under the  periodic boundary condition \eqref{eq:bdry-peri}, $H  $, $\Delta $ and $V$ can be identified as $K\times K$ matrices, acting on $\vec{\phi}\in \R^K$, for which we abuse the notation and still denote them by $\Delta $ and $V$:
\begin{align}\label{eq:LapPeri}
    -\Delta  =\begin{pmatrix}
	2 & -1 &0  & \cdots &  -1 \\
	-1 &   2 &\ddots   &  \vdots \\
	0 & \ddots& \ddots&\ddots  & 0 \\
	\vdots  &\ddots & \ddots&  2   &-1 \\
	-1  &\cdots & 0&-1 & 2
\end{pmatrix}_{K\times K}, \ \ V=\begin{pmatrix}
	  v_1 & 0 &0  & \cdots &  0 \\
	0 &   v_2 & 0 &\ddots   &  \vdots \\
	0 & \ddots& \ddots&\ddots  & 0 \\
	\vdots  &\ddots & \ddots&  v_{K-1}   &0 \\
	0  &\cdots & 0&0 &   v_K
\end{pmatrix}_{K\times K}
\end{align}

  When $v_i$ are taken as i.i.d. random variables, Schr\"odinger operators (matrices) $H =-\Delta   +V$ are the standard 1-d Anderson models, restricted on a finite box. See e.g. in Figure \ref{fig:BV} where $v_i$ are i.i.d. random variables with a Bernoulli distribution. 
\begin{figure}
    \centering
    \includegraphics[width=\textwidth]{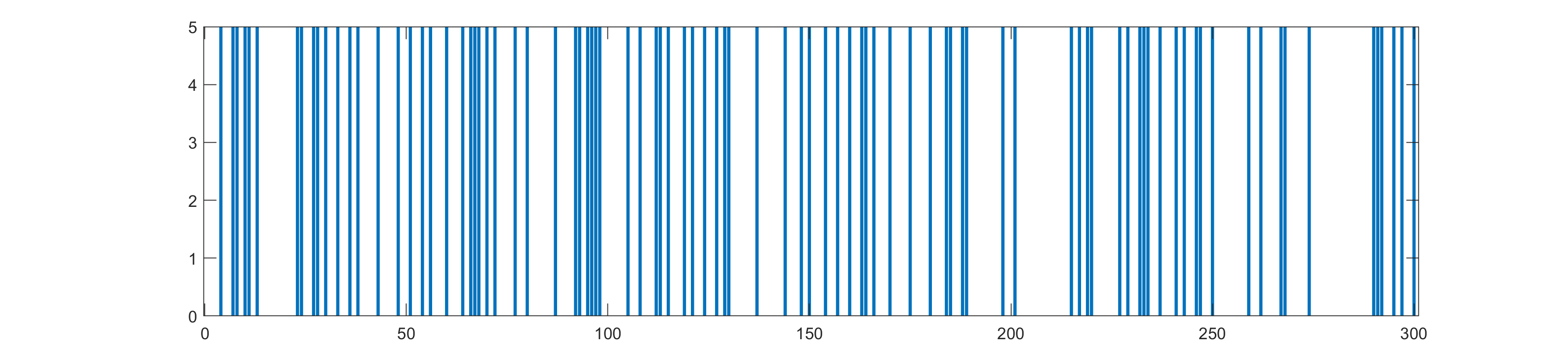}
    \caption{i.i.d. $\{0,5\}$ Bernoulli random potential with  probabilities $70\%$ and $30\%$, over a lattice of size $300$.}
    \label{fig:BV}
\end{figure}
Since the work of P.W. Anderson \cite{An}, many physicists have developed a fairly good knowledge
about the spectrum of random Schr\"odinger operators, though  only
 part of it has been shown with mathematical rigor. One of the most important phenomenon is called Anderson localization or exponential localization of the eigenfunctions of disordered systems. An example of the localization of the 1-d  random Schr\"odinger operator $H =-\Delta +V$ is given  in Figure  \ref{fig:phi14}.
\begin{figure}
    \centering
    \includegraphics[width=\textwidth]{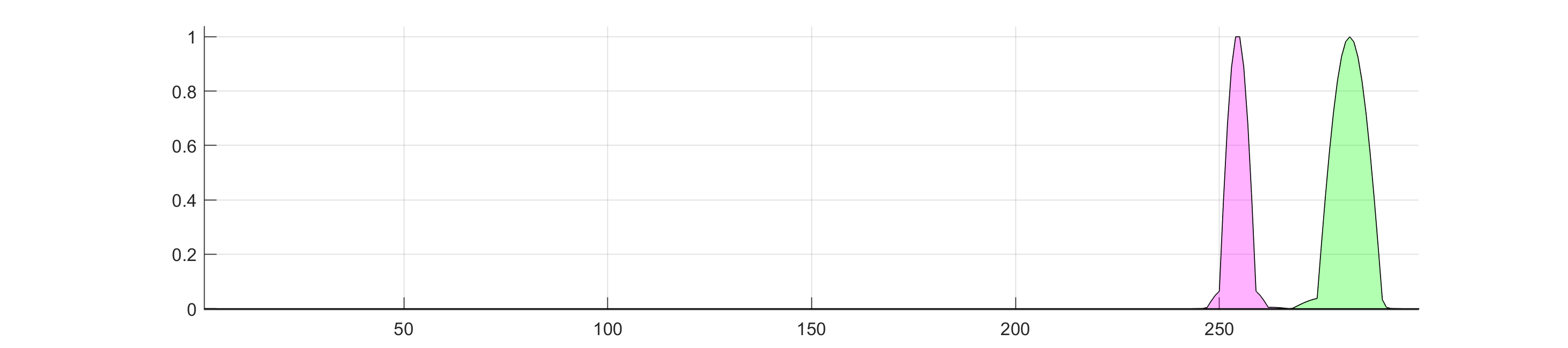}
    \caption{The first (green) and the fourth (pink)  eigenfunctions of the corresponding Schr\"odinger operator in \eqref{eq:LapPeri}, with $K=300$ and $v_n$ the Bernoulli potential in Figure \ref{fig:BV}. }
    \label{fig:phi14}
  \end{figure}	
  
  On the other hand, a fundamental result for the localization of eigenfunctions of the Schr\"odinger operators is Agmon's theory (see \cite{Ag,H}), which demonstrates an exponential decay for a large class of potentials. Roughly speaking, the exponential decay of eigenfunctions comes from the barrier of a general confining potential $V$.  One key step in such classical confinement is to study the so-called Agmon metric associated to the potential $V$ and energy $\mu$. Unfortunately, for the random potentials which we showed above, the Agmon metric is highly degenerate, and is not generally useful. 
  
Instead of looking at the original potential $V$, we are interested in the barrier and wells of the effective potential, $W_n=1/u_n,n=1,\cdots,K$, see Figure \ref{fig:BW}. The landscape function $u\in \R^K$ is the unique solution to the matrix equation $(-\Delta+V)u=\vec 1$, where $\vec 1=(1, \cdots,1)^T$ is the $K$-dimensional vector with all entries constantly one.
\begin{figure}
    \centering
    \includegraphics[width=\textwidth]{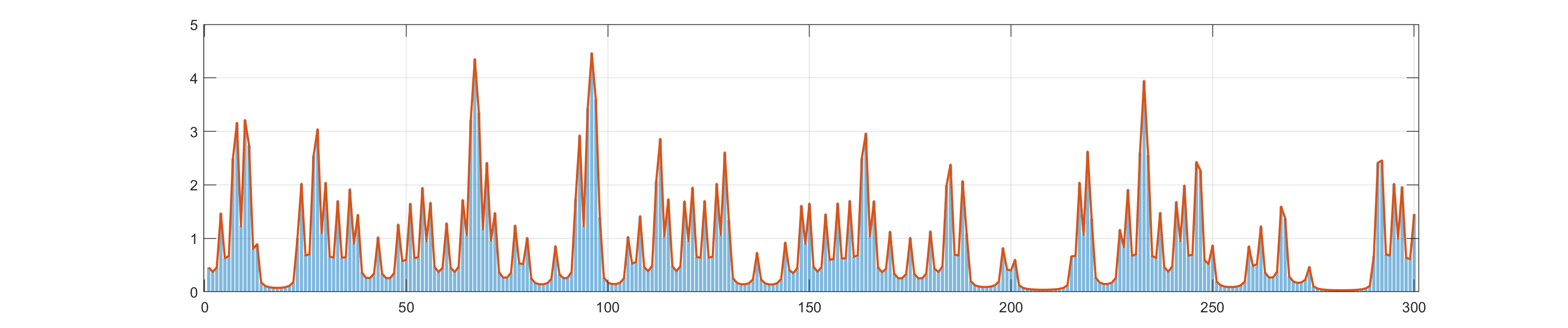}
    \caption{The effective potential $W_n=1/u_n$ associated to the Bernoulli potential of Figure \ref{fig:BV} }
    \label{fig:BW}
  \end{figure}

  Similar to the continuous case, the first result of this paper, Theorem \ref{thm:Agmon-intro} (see also Theorem \ref{thm:Agmon}), is the rigorous proof that the steep decay in
Figure \ref{fig:phi14} comes from the barriers of the effective potential $W_n$ in Figure \ref{fig:BW}. This is proved by formulating and proving appropriate exponential decay estimates of Agmon type.  We will see more numerical experiments and the comparison with our main results in Section \ref{sec:num}

\subsection{Landscape theory on discrete lattices}\label{sec:LL}

For a $\Z^d$ sequence $f=\{f_n\}_{n\in \Z^d}$,  its directional (right) derivatives and the gradient on $\Z^d$ are defined as \begin{align}\label{eq:nabla-i}
    \nabla_if_n=f_{n+e_i}-f_{n},\ \  i=1,\cdots, d, 
\end{align}
and its gradient $\nabla f:M \mapsto \R^d$ is
\begin{align*}
   \nabla f(n)=\left(\nabla_1f_n,\nabla_2f_n,\cdots,\nabla_df_n\right).
\end{align*}

We also denote the dot product and the induced norm of $\nabla f,\nabla g$, as $\R^d$ vectors, by
\begin{align*}
&(\nabla g \cdot \nabla f)(n)=\sum_{i=1}^d \nabla_i g_n\cdot \nabla_i f_n ,\ \ {\rm and} \ \  |\nabla f|(n):=  \sqrt{ (\nabla f \cdot \nabla f)(n)}
\end{align*}

\begin{lem}[Green's first identity]\label{lem:Dform}
Let $M=\Z^d/K\Z^d$. For any $f,g\in \cH=\ell^2({M})$, 
\begin{align}\label{eq:Dform}
    \ipc{g}{-\Delta f}_{\cH}  =\sum_{n\in {M}}(\nabla g \cdot \nabla f)(n) , \ \ \ 
    \ipc{f}{-\Delta f}_{\cH}=\ipc{|\nabla f|}{|\nabla f|}_\cH
\end{align}

\end{lem}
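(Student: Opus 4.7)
The plan is a direct summation-by-parts computation. First I would rewrite the discrete Laplacian in terms of the forward difference $\nabla_i$ from \eqref{eq:nabla-i}: since $f_{n+e_i}-f_n = \nabla_i f_n$ and $f_{n-e_i}-f_n = -\nabla_i f_{n-e_i}$, the definition \eqref{eq:Lap} rearranges to
\[
(-\Delta f)_n \;=\; \sum_{i=1}^d\bigl[\nabla_i f_{n-e_i} - \nabla_i f_n\bigr].
\]
Multiplying by $g_n$ and summing over $n\in M$ then gives
\[
\ipc{g}{-\Delta f}_{\cH} \;=\; \sum_{i=1}^d\sum_{n\in M} g_n\bigl[\nabla_i f_{n-e_i} - \nabla_i f_n\bigr].
\]

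The decisive step is an index shift. Because $M=\Z^d/K\Z^d$ is a torus, the translation $n\mapsto n+e_i$ is a bijection of $M$, so $\sum_{n\in M} g_n\,\nabla_i f_{n-e_i} = \sum_{n\in M} g_{n+e_i}\,\nabla_i f_n$ with no boundary correction. Substituting this back and combining with the second piece yields
\[
\ipc{g}{-\Delta f}_{\cH} \;=\; \sum_{i=1}^d\sum_{n\in M}(g_{n+e_i}-g_n)\,\nabla_i f_n \;=\; \sum_{n\in M}\sum_{i=1}^d \nabla_i g_n\cdot\nabla_i f_n \;=\; \sum_{n\in M}(\nabla g\cdot\nabla f)(n),
\]
which is the first identity in \eqref{eq:Dform}. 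The second identity is then the special case $g=f$, using $(\nabla f\cdot\nabla f)(n)=|\nabla f|^2(n)$ by the definition of $|\nabla f|$.

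The only subtle point in this outline is the translation invariance used in the index shift; this is precisely where the periodic boundary condition \eqref{eq:bdry-peri} is essential, since on a non-periodic lattice the same manipulation would leave uncancelled boundary contributions on the faces of the cube. I expect no other obstacle: the identity is the discrete, compact analogue of the standard integration-by-parts formula, and the argument above is essentially a one-line manipulation. The same strategy, after extending $f,g$ by zero outside $M$ and tracking the resulting boundary terms, will be what is needed for the Dirichlet version in Section \ref{sec:diri}.
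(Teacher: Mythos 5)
Your proposal is correct and follows essentially the same route as the paper: both proofs split the discrete Laplacian into forward and backward difference contributions, use the periodicity of $M=\Z^d/K\Z^d$ to shift the summation index in the backward part with no boundary terms, and recombine to get $\sum_{n}\nabla_i g_n\nabla_i f_n$ in each direction. The second identity as the case $g=f$ and your closing remark about the Dirichlet modification also match the paper's treatment.
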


\begin{remark}
One can find the general discrete version of Green’s formula on graphs, for example in \cite{G,Chung}. We sketch the proof on $\Z^d$ lattices with the periodic boundary conditions for the reader's convenience. 
\end{remark}

\begin{proof}
For $n=(n_1,\cdots,n_d)\in\Z^d$, let $\check  n_i:=(n_1,\cdots,n_{i-1},n_{i+1},\cdots,n_d),\ i=1\cdots, d$. 
Direct computations show that for $i=1,\cdots,d$, 
\begin{align*}
    \sum_{n_i=1}^K g_n(- f_{n+e_i}-f_{n-e_i}+2 f_n)=&\sum_{n_i=1}^K g_n(- f_{n+e_i}+ f_n)+\sum_{n_i=1}^K g_n(-f_{n-e_i}+ f_n)\\
    =&-\sum_{n_i=1}^K g_n\nabla_if_n+\sum_{n_i=1}^K g_{n+e_i}(-f_{n}+ f_{n+e_i})\\
    =&-\sum_{n_i=1}^K g_n\nabla_if_n+\sum_{n_i=1}^K g_{n+e_i}\nabla_if_n\\
    =&\sum_{n_i=1}^K \nabla_ig_n\nabla_if_n
\end{align*}
where in the second sum, we used the periodicity to shift the sum in the i-th direction. 

Therefore, 
\begin{align*}
  \ipc{g}{-\Delta f}_{\cH}=  \sum_{n\in {M}}g_n\sum_{i=1}^d (- f_{n+e_i}-f_{n-e_i}+2 f_n)
    =&\sum_{i=1}^d  \sum_{\check n_i}\sum_{n_i=1}^K g_n(- f_{n+e_i}-f_{n-e_i}+2 f_n)\\
    =&\sum_{i=1}^d\sum_{\check n_i}\sum_{n_i=1}^K \nabla_ig_n\nabla_if_n          \\
    =&\sum_{i=1}^d\sum_{n\in M} \nabla_ig_n\nabla_if_n         =\sum_{n\in M} (\nabla g \cdot \nabla f )(n)
\end{align*}

\end{proof}

Notice that by \eqref{eq:Dform}, $\ipc{f}{\Delta f}=0$ implies $f_n=a$ is a constant vector for $n\in {M}$.  Therefore, $-\Delta $ is non-negative and the first (isolated) eigenvalue is $0$,  with eigenfunction $\vec 1\in \ell^2({M})$ such that  $\vec 1_n=1,\ \ n\in {M} $, i.e.,
\begin{align}\label{eq:vec1}
    \vec{1}\cong\underbrace{\begin{pmatrix}1\\\vdots\\1\end{pmatrix}_K\otimes \begin{pmatrix}1\\\vdots\\1\end{pmatrix}_K \otimes
    \cdots \otimes \begin{pmatrix}1\\\vdots\\1\end{pmatrix}_K}_{d}
\end{align}
Clearly, $\ipc{\vec{1}}{(-\Delta + V\vec{1})}_{\cH}= \ipc{\vec{1}}{  V\vec{1}}_{\cH}=  \sum_{n\in{M}} v_n\ge   \max v_n>0$, which also implies that  $\ipc{f}{(-\Delta + V)f}_{\cH}>0$ for all $0\neq f\in \ell^2({M} )$.   Therefore $H =-\Delta   +V$ is invertible on $\ell^2(M)$, which allows us to solve the inhomogeneous equation $H  u=\vec{1}$. We will refer the equation as the \emph{landscape equation} from now on. The unique solution $u=\{u_n\}_{n\in {M}}$,  will be called the {\emph{landscape function}} throughout the paper, which is the discrete analogue of the landscape function in the continuum case. We summarized it here as
\begin{theorem}\label{thm:landscape}
Assume that $v_n \ge 0$ and is not identically zero for $n\in {M}$. The inhomogeneous boundary value problem
\begin{align}\label{eq:landscape-eq}
 (H  \phi)_n=-(\Delta    \phi)_{n}+ v_n \phi_{n}=1,\ \ n\in {M}
\end{align}
has a unique solution $u=\{u_{n}\}\in \ell^2({M})$.  
\end{theorem}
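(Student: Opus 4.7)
The plan is to exploit that $H = -\Delta + V$ is a symmetric linear map on the finite-dimensional Hilbert space $\cH = \ell^2(M)$, so uniqueness and existence of a solution to $Hu = \vec 1$ both reduce to proving that $H$ is invertible, and the simplest route is to show $H$ is strictly positive definite. First I would use Lemma \ref{lem:Dform} to write
\begin{equation*}
\ipc{f}{Hf}_{\cH} = \ipc{|\nabla f|}{|\nabla f|}_{\cH} + \sum_{n\in M} v_n f_n^2,
\end{equation*}
which is manifestly non-negative since both summands are. This immediately shows $H$ is positive semi-definite.

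Next I would check that $H$ has trivial kernel. Suppose $Hf = 0$ for some $f\in\cH$. Pairing with $f$ and using the identity above forces $|\nabla f|(n) = 0$ for every $n\in M$ and $v_n f_n^2 = 0$ for every $n\in M$. The first condition means $\nabla_i f_n = 0$ for all $i$ and $n$; since $M = \Z^d/K\Z^d$ is connected as a graph under nearest-neighbor adjacency, this forces $f \equiv a$ for some constant $a\in\R$. The second condition then reads $a^2 v_n = 0$ for all $n$, and because $v_n \ge 0$ is not identically zero, at least one $v_{n_0} > 0$, which forces $a = 0$, hence $f \equiv 0$.

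Having shown $\ker H = \{0\}$, I conclude that the symmetric matrix $H$ acting on the finite-dimensional space $\cH$ is invertible, so for any right-hand side, in particular $\vec 1$ from \eqref{eq:vec1}, the equation $Hu = \vec 1$ has a unique solution $u = H^{-1}\vec 1 \in \cH$. This is precisely the landscape function.

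There is no real obstacle here: the only point that requires attention is verifying that the constants really do exhaust the kernel of $-\Delta$ on the torus, which is where one has to invoke connectivity of the underlying graph $\Z^d/K\Z^d$ (equivalently, no blank lines need be read between the directional derivative conditions and a constancy conclusion). All other steps are formal consequences of Green's identity and finite-dimensional linear algebra. One could alternatively phrase the argument variationally by noting that the quadratic form $\ipc{f}{Hf}_{\cH}$ attains its minimum over the unit sphere at some $f_\ast$, and that $\ipc{f_\ast}{Hf_\ast}_{\cH} > 0$ by the same kernel analysis, giving a coercivity constant and hence invertibility; but the direct route above is cleaner.
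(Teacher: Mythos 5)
Your proposal is correct and follows essentially the same route as the paper: the text preceding Theorem \ref{thm:landscape} uses Green's identity \eqref{eq:Dform} to show $-\Delta$ is non-negative with kernel consisting of constants, then observes that $\ipc{\vec 1}{V\vec 1}_{\cH}=\sum_n v_n>0$ forces $\ipc{f}{Hf}_{\cH}>0$ for all $f\neq 0$, giving invertibility on the finite-dimensional space. Your write-up merely makes explicit the connectivity of the torus graph, which the paper leaves implicit.
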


The following maximum principle shows that the landscape function $u_n$ is positive everywhere in the cube ${M}$ with an explicit lower bound independent of the size of the cube. 

\begin{lem}[Maximum Principle]\label{lem:maxP}
Suppose $f\in \cH=\ell^2(M)$ satisfies that  $(H  f)_n\ge0,\  {n\in {M}} $, then $f_n\ge0$ for all $ {n\in {M}} $. 

As a consequence, if $(H  u)_n=1,n\in {M}$ for $u\in \cH$, then 
\begin{align}\label{eq:u-lower}
    \min_{ {n\in {M}} }u_n\ge 1/V_{\max}  >0.
\end{align}

\end{lem}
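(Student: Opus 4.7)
The plan is a standard discrete strong-maximum-principle argument: propagate the minimum along the graph, get a contradiction from the assumption that $V$ is not identically zero, and then derive the quantitative lower bound by applying the principle to a shifted function.

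First I would suppose, aiming at contradiction, that $\min_{n \in M} f_n < 0$, and pick $n_0 \in M$ realizing this minimum. Using the form \eqref{eq:opH} of $H$ I can rewrite
\begin{align*}
(Hf)_{n_0} \;=\; \sum_{|m-n_0|_1=1}(f_{n_0} - f_m) \,+\, v_{n_0} f_{n_0}.
\end{align*}
Each summand $f_{n_0}-f_m$ is non-positive by minimality of $f_{n_0}$, while $v_{n_0}f_{n_0}\le 0$ because $v_{n_0}\ge 0$ and $f_{n_0}<0$. The hypothesis $(Hf)_{n_0}\ge 0$ then forces each of these non-positive contributions to vanish; in particular $v_{n_0}=0$ and $f_m = f_{n_0}$ at every nearest neighbor $m$ of $n_0$.

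Next I would iterate: every such neighbor $m$ also attains the (negative) minimum of $f$, so the same computation at $m$ yields $v_m=0$ and $f_{m'} = f_{n_0}$ for every nearest neighbor $m'$ of $m$. Because the torus $M=\Z^d/K\Z^d$ is connected in the nearest-neighbor graph induced by $|\cdot|_1$, induction on the graph distance from $n_0$ propagates this equality to all of $M$, giving $f\equiv f_{n_0}$ and $v_n\equiv 0$ on $M$. The latter contradicts the standing assumption that $V$ is not identically zero, which completes the proof that $f_n\ge 0$ throughout $M$.

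For the bound \eqref{eq:u-lower}, I would apply the maximum principle just established to $g := u - V_{\max}^{-1}\,\vec 1$. Since $-\Delta \vec 1 = 0$ (visible directly from \eqref{eq:Lap}), linearity of $H$ gives
\begin{align*}
(Hg)_n \;=\; (Hu)_n - V_{\max}^{-1}(H\vec 1)_n \;=\; 1 - v_n/V_{\max} \;\ge\; 0
\end{align*}
for every $n\in M$, by the hypothesis $v_n \le V_{\max}$. Thus $g_n\ge 0$, i.e.\ $u_n\ge 1/V_{\max}$, as claimed. The only delicate step is the propagation: it relies both on connectedness of $M$ and, crucially, on $V\not\equiv 0$---without the latter, the constant sequence $f\equiv -1$ would satisfy $Hf\equiv 0\ge 0$ yet be negative. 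Everything else is bookkeeping of sign conditions at a minimizer.
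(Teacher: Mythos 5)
Your argument is correct and is essentially the paper's own proof: both localize at a negative minimizer, use the sign structure of $(Hf)_{n}\ge 0$ there to force $v=0$ at that site and equality of $f$ at all neighbors, propagate by connectedness to contradict $V\not\equiv 0$, and then obtain \eqref{eq:u-lower} by applying the principle to $u - V_{\max}^{-1}\vec 1$ using $\Delta\vec 1=0$. The only cosmetic difference is that the paper first extends $f$ periodically to $\Z^d$ before propagating, whereas you work directly with the nearest-neighbor graph on the torus; both are fine.
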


\begin{proof}
We shall prove the maximum principle first.  It is enough to consider $f\in \cH=\ell^2({M})$ not identically zero. If $f$ satisfies the periodic boundary condition \eqref{eq:bdry-peri}, it is more convenient to extend $f$ periodically to the entire $\Z^d$ lattice such that 
\begin{align*}
    f_{n+Ke_i}=f_{n}, \ \ i=1\cdots,d, \, n\in \Z^d
\end{align*}
The operator $H  $  is then extended to $\ell^2(\Z^d)$ automatically as in \eqref{eq:opH}, satisfying $(H  f)_n\ge 0,n\in\Z^d$.

Suppose $-a:=\min_{n\in\Z^d}f_n=\min_{n\in{M}}f_n<0$.  Let $j\in {M}$ be such that $f_{j}=-a<0.$ $(H   f)_j\ge 0$ implies that 
\begin{align*}
    2df_{j}+  v_j f_j\ge \sum_{1\le i\le d} (f_{j+e_i}+f_{j-e_i})\ge 2d f_j
\end{align*}

Notice that $v_j\ge 0$, $f_j<0$ and $f_{j\pm e_i}\ge f_j$, the only chance for the above inequality to hold is that $v_j=0$ and $f_{j+e_i}=f_{j-e_i}=-a$ for all $1\le i\le d$. Repeat the above argument for $f_{j+e_i},1\le i \le d$ since now $f_{j+e_i}=f_j<0$ is also a (global) minimum, we have $f_{j+2e_i}=f_{j+e_i}=f_j<0$. Inductively, one has  $v_n=0$ for all $n\in {M}$ and $f_n=f_j=-a<0$ for all $n\in {M}$, which is a contradiction. Therefore, $f_n\ge 0$ for all $n\in {M}$.

To obtain \eqref{eq:u-lower}, it is enough to pick an appropriate test function. Now suppose $u\in \cH$ satisfies $(H  u)_n= 1, n\in{M}$. Let $\vec 1$ be as in \eqref{eq:vec1}, and let $f_n=u_n-V^{-1}_{\max}\cdot \vec 1$. It is easy to check that 
\begin{align*}
    (H  f)_n=&(H  u)_n+V^{-1}_{\max}   \cdot(\Delta    \vec 1)_n-V^{-1}_{\max}   \cdot (V\vec 1)_n=1-V^{-1}_{\max}   \cdot {v_n}\ge 0
\end{align*}
since  $(\Delta\vec 1)_n=0$ for all $n\in {M}$

 By the maximum principle, $f_n=u_n-V^{-1}_{\max}  \ge 0$ for all $ {n\in {M}} $, which gives \eqref{eq:u-lower}. 
\end{proof}

The next lemma is a discrete analogue of Lemma. 4.1 in \cite{ADFJM-CPDE}, which is a new uncertainty principle regarding the effective potential  $W_n=1/{u_n}$.

\begin{lem}\label{lem:eff}
Let $H =-\Delta+V$ be as in \eqref{eq:opH}.  Suppose $(H  u)_n=1,n\in {M}$ for $u\in \cH=\ell^2(M)$.  For any $f,g\in \cH $,  one has
\begin{align}
    \ipc{g}{H  f}_\cH
         =&\sum_{n\in {M}}\,\sum_{1\le i\le d} \left(u_{n+e_i} u_n\cdot  \nabla_i\frac{ g_{n}}{u_n}\cdot  \nabla_i \frac{ f_{n}}{u_n} \right)\,+\, \sum_{n\in {M}}\, \frac{1}{u_n} \, g_{n}f_{n} \label{eq:eff}
\end{align}
where $\nabla_i\frac{ g_{n}}{u_n}=\frac{ g_{n+e_i}}{u_{n+e_i}}-\frac{ g_{n}}{u_n}$ is as in \eqref{eq:nabla-i}. 

In particular, 
\begin{align}
\ipc{f}{H  f}_\cH 
=\sum_{n\in {M}}\,\sum_{1\le i\le d} u_{n+e_i} u_n\cdot  \left(\nabla_i\frac{ f_{n}}{u_n}\right)^2
\, +\,\sum_{n\in {M}}\, \frac{1}{u_n}f_n^2 
\,\ge\, \sum_{n\in {M}}\frac{1}{u_n}f_n^2 \label{eq:eff1}
\end{align}
\end{lem}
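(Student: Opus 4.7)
My plan is to prove the identity \eqref{eq:eff} by combining Green's first identity (Lemma~\ref{lem:Dform}) with a discrete Leibniz-rule rewriting that extracts the landscape $u$ from $f$ and $g$; the residual first-order terms are then recycled via the landscape equation $(Hu)_n=1$ to produce the Hardy-type weight $1/u_n$ on the diagonal. Throughout I set $F_n:=f_n/u_n$ and $G_n:=g_n/u_n$, which is well defined because Lemma~\ref{lem:maxP} gives $u_n\ge 1/V_{\max}>0$.

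By Lemma~\ref{lem:Dform} and the definition \eqref{eq:opH},
\[
\ipc{g}{Hf}_{\cH}=\sum_{n\in M}\sum_{i=1}^{d}\nabla_i g_n\,\nabla_i f_n+\sum_{n\in M}v_n\, g_n f_n.
\]
To generate the desired mixed factor $u_{n+e_i}u_n$, I apply two \emph{non-symmetric} forms of the discrete product rule,
\[
\nabla_i g_n=u_{n+e_i}\nabla_i G_n+G_n\nabla_i u_n,\qquad \nabla_i f_n=u_n\nabla_i F_n+F_{n+e_i}\nabla_i u_n,
\]
multiply them out, and separate off the leading term $u_{n+e_i}u_n\nabla_i G_n\nabla_i F_n$. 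The three cross terms that remain regroup as $\nabla_i u_n\cdot\nabla_i(g_nf_n/u_n)$: starting from the matching asymmetric decomposition
\[
\nabla_i\!\left(\frac{g_nf_n}{u_n}\right)=\nabla_i(u_nG_nF_n)=F_{n+e_i}\,\nabla_i g_n+u_nG_n\,\nabla_i F_n
\]
and substituting the Leibniz formula for $\nabla_i g_n$ in the first summand reproduces the three cross terms exactly.

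Summing over $n$ and $i$, invoking Lemma~\ref{lem:Dform} in reverse, and using the landscape equation $(-\Delta u)_n=1-v_n u_n$ yields
\[
\sum_{n,i}\nabla_i g_n\nabla_i f_n-\sum_{n,i}u_{n+e_i}u_n\nabla_i G_n\nabla_i F_n=\sum_{n\in M}\frac{g_nf_n}{u_n}-\sum_{n\in M}v_n g_n f_n.
\]
Adding $\sum_n v_ng_nf_n$ to both sides and combining with the opening identity delivers \eqref{eq:eff}. Setting $g=f$ and noting that $u_{n+e_i}u_n(\nabla_i F_n)^2\ge 0$ by positivity of $u$ gives \eqref{eq:eff1}. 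The only non-routine point is the algebraic regrouping above: the asymmetry of the discrete Leibniz rule forces $u_{n+e_i}$ to pair with $\nabla_i G_n$ but $u_n$ with $\nabla_i F_n$ (or vice versa), and this choice of pairing is precisely what produces the mixed weight $u_{n+e_i}u_n$ in \eqref{eq:eff}, in contrast to the $u^2$ weight one sees in the continuum Hardy identity.
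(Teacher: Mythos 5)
Your proof is correct, but it takes a genuinely different route from the paper's. The paper never invokes Green's identity for this lemma: it rewrites the landscape equation pointwise as $\frac{\sum_i (u_{n+e_i}+u_{n-e_i})}{u_n}+\frac{1}{u_n}=2d+v_n$, substitutes this for $2d+v_n$ inside the fully expanded quadratic form $\ipc{g}{Hf}_\cH$, shifts indices using periodicity, and then verifies termwise the algebraic identity
$\frac{u_{n+e_i}}{u_n}g_nf_n+\frac{u_n}{u_{n+e_i}}g_{n+e_i}f_{n+e_i}-g_nf_{n+e_i}-g_{n+e_i}f_n=u_{n+e_i}u_n\,\nabla_i\frac{g_n}{u_n}\,\nabla_i\frac{f_n}{u_n}$.
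You instead start from the Dirichlet form via Lemma~\ref{lem:Dform}, peel off the weighted gradient term with two asymmetric discrete product rules, identify the cross terms as $\nabla_i u_n\cdot\nabla_i(g_nf_n/u_n)$, and close the argument by applying Lemma~\ref{lem:Dform} in reverse to the pair $(gf/u,\,u)$ together with $(-\Delta u)_n=1-v_nu_n$; I checked the Leibniz identities and the regrouping, and they are exact. Your route is structurally closer to the continuum argument of \cite{ADFJM-CPDE} (testing $-\Delta u$ against $gf/u$), makes the role of the landscape equation more transparent, and would transfer to any graph on which a Green's identity holds; it does require observing that $gf/u\in\ell^2(M)$, which is immediate here from Lemma~\ref{lem:maxP} and finiteness of $M$. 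The paper's computation is more pedestrian but entirely self-contained and makes explicit exactly where the periodic boundary condition is used (the index shifts), whereas in your version that dependence is hidden inside the two applications of Lemma~\ref{lem:Dform}. The deduction of \eqref{eq:eff1} by setting $g=f$ and dropping the nonnegative weighted term is the same in both.
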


\begin{proof}
The landscape equation \eqref{eq:landscape-eq} gives that for all $n\in {M}$, 
\begin{align}\label{eq:temp99}
    \frac{\sum_i u_{n+e_i}+u_{n-e_i}}{u_n}+\frac{1}{u_n}=2d+v_n .
\end{align}

For $f,g\in \cH  $,  we have that 
\begin{align*}
    \ipc{g}{H  f}_\cH  =&\ipc{g}{Vf}_\cH+\ipc{g}{-\Delta    f}_\cH\\
 =&\sum_{n\in {M}}v_n g_{n}f_{n}\,+\, \sum_{n\in {M}}\, g_{n}\sum_{1\le i\le d}\left(2f_{n}-f_{n+e_i}-f_{n-e_i}\right) \\
 =&\sum_{n\in {M}}v_n g_{n}f_{n}\,+\, \sum_{n\in {M}}\, 2d g_{n}f_{n}-\, \sum_{n\in {M}}\sum_{1\le i\le d}(g_nf_{n+e_i}+g_nf_{n-e_i}) \\
   =&\sum_{n\in {M}}\, \left(\frac{\sum_i (u_{n+e_i}+u_{n-e_i})}{u_n}+\frac{1}{u_n} \right)\, g_{n}f_{n}-\, \sum_{n\in {M}}\sum_{1\le i\le d}(g_nf_{n+e_i}+g_{n+e_i}f_n  )
\end{align*}
In the last step we used \eqref{eq:temp99} and the periodic condition
$\sum_{1\le n_i\le K}g_nf_{n-e_i}=\sum_{1\le n_i\le K}g_{n+e_i}f_n$.   
Continuing the computation, one has 
\begin{align*}
    &\sum_{n\in {M}}\, \left(\frac{\sum_i u_{n+e_i}+u_{n-e_i}}{u_n}+\frac{1}{u_n} \right)\, g_{n}f_{n}-\, \sum_{n\in {M}}\sum_{1\le i\le d}(g_nf_{n+e_i}+g_{n+e_i}f_n )    \\
      =&\sum_{n\in {M}}\, \frac{1}{u_n} \, g_{n}f_{n}
      +\sum_{n\in {M}}\,\sum_{1\le i\le d}\left( \frac{ u_{n+e_i}}{u_n} \, g_{n}f_{n}
       +\frac{ u_{n-e_i}}{u_n} \, g_{n}f_{n}\right) -\, \sum_{n\in {M}}\sum_{1\le i\le d}(g_nf_{n+e_i}+g_{n+e_i}f_n) \\
      =&\sum_{n\in {M}}\, \frac{1}{u_n} \, g_{n}f_{n}
      +\sum_{\mathclap{ \substack{n\in {M}\\1\le i\le d}}} \frac{ u_{n+e_i}}{u_n} \, g_{n}f_{n}
       +\sum_{\mathclap{ \substack{n\in {M}\\1\le i\le d}}} \frac{ u_{n}}{u_{n+e_i}} \, g_{n+e_i}f_{n+e_i}
       -\, \sum_{\mathclap{ \substack{n\in {M}\\1\le i\le d}}}(g_nf_{n+e_i}+g_{n+e_i}f_n )
\end{align*}
where we used again the boundary condition for $u,f,g\in \cH $ to rewrite the sum
$$\sum_{1\le n_i\le K} \frac{ u_{n-e_i}}{u_n} \, g_{n}f_{n}=\sum_{1\le n_i\le K} \frac{ u_{n}}{u_{n+e_i}} \, g_{n+e_i}f_{n+e_i}.$$ Therefore, 

\begin{align*}
 &\ipc{g}{H  f}_\cH \\
      =&\sum_{n\in {M}}\, \frac{1}{u_n} \, g_{n}f_{n} +\sum_{n\in {M}}\,\sum_{1\le i\le d} \frac{ u_{n+e_i}}{u_n} \, \left(g_{n}f_{n}
       + \frac{ u_{n}}{u_{n+e_i}} \, g_{n+e_i}f_{n+e_i}-g_nf_{n+e_i}-g_{n+e_i}f_n \right)\\
        =&\sum_{n\in {M}}\, \frac{1}{u_n} \, g_{n}f_{n} +\sum_{n\in {M}}\,\sum_{1\le i\le d} u_{n+e_i} u_n \left( \frac{ g_{n}}{u_n} \cdot   \frac{ f_{n}}{u_n}
       + \frac{ g_{n+e_i}}{u_{n+e_i}}\cdot \frac{ f_{n+e_i}}{u_{n+e_i}}
       - \frac{ g_{n}}{u_n} \cdot \frac{ f_{n+e_i}}{u_{n+e_i}}-\frac{ g_{n+e_i}}{u_{n+e_i}}\cdot \frac{ f_{n}}{u_n}  \right) \\
        =&\sum_{n\in {M}}\, \frac{1}{u_n} \, g_{n}f_{n} +\sum_{n\in {M}}\,\sum_{1\le i\le d} u_{n+e_i} u_n \left( \frac{ g_{n+e_i}}{u_{n+e_i}}- \frac{ g_{n}}{u_n} \right)\cdot \left( \frac{ f_{n+e_i}}{u_{n+e_i}}-  \frac{ f_{n}}{u_n}  \right) \\
         =&\sum_{n\in {M}}\, \frac{1}{u_n} \, g_{n}f_{n} +\sum_{n\in {M}}\,\sum_{1\le i\le d} u_{n+e_i} u_n\cdot  \nabla_i\frac{ g_{n}}{u_n}\cdot  \nabla_i \frac{ f_{n}}{u_n} 
\end{align*}

\end{proof}
Equation \eqref{eq:eff1} shows that $\ipc{f}{H  f}_\cH$ is conjugated to a new Hamiltonian which has a similar form but with a new potential $1/u_n$ replacing $v_n$. The new potential captures effects of both the kinetic and the potential energies.  Similar conjugate was first found for the continuous case in \cite{ADFJM-CPDE}, which plays the most important role. In the next section, we are going to show that  \eqref{eq:eff1} implies that eigenfunctions  have ``most'' of their mass in the region where $1/u_n$ is relatively small (effective wells); and exponential decay in the complementary region. Such result is usually referred as decay estimates of Agmon type. 

\subsection{Agmon estimates}\label{sec:pfAgmon}

To formulate our result precisely, we need a few more definitions. Suppose $(\mu,\varphi)$ is an eigenpair and $u$ is the landscape function of  $H $ in $\cH=\ell^2(M)$ , i.e., $H \varphi=\mu\varphi$, and $Hu=\vec 1$. For $\delta>0$, let the classical allowed region associated to the effective potential $1/{u_n}$ be given by
\begin{align}\label{eq:allowed-region}
    J_\delta(\mu):=\left \{ {n\in {M}} :\ \frac{1}{u_n}\le \mu+\delta \right\}
\end{align}

Define 
\begin{align}\label{eq:Agmon-weight}
    w_\mu(n)=\left(\frac{1}{u_n}-\mu\right)_+:=\max\left\{\frac{1}{u_n}-\mu,0\right\},\ n \in {M} 
\end{align}

The Agmon metric associated to this weight function will be defined as 
\begin{align}\label{eq:Agmon-metric}
    \rho_\mu(n,m)
    =\inf_{\gamma\in \cP(n,m)}\sum_{j=1}^{k(\gamma)}\ln \left(1+\sqrt{ \min\{w_\mu(\gamma_j),w_\mu(\gamma_{j+1})\}}\right)
\end{align}
where $\cP(n,m)$ is the collection of possible unit step path in $M\subset\Z^d$, connecting $n$ and $m$, 
\begin{align*}
&    \cP(n,m)=\left\{\, \gamma= \{\gamma_j\}_{j=1}^k\subset {M},   k=k(\gamma)\in \N \ | \right.  \\
& \qquad \qquad \qquad \qquad \qquad  \left. \gamma_1=n,\gamma_k=m,\, |\gamma_{j+1}-\gamma_j|_1=1,j=1,\cdots,k-1 \right\}
\end{align*}

It is easy to check that the infimum can always be attained by a non-self-intersecting path in the cube ${M}$ since there are only finitely many of them. Therefore, $ \rho_\mu(n,m)\le \rho_\mu(n,\ell)+\rho_\mu(\ell,m),\ \forall m,n,\ell\in {M}$, which indeed defines a semi-metric in the cube.  Also note the metric between points in a connected component of the set $\{w_\mu(n)=0\}\subset  J_\delta(\mu)$ is zero. 

The Agmon distance (weight) to the classical allowed region is defined as 
\begin{align}\label{eq:Agmon-dis}
    h_n=\inf_{m\in J_\delta(\mu)}\rho_\mu(n,m)
\end{align}
With these notations, we have
\begin{theorem}\label{thm:Agmon}
Suppose $K\in \N,K\ge 3$. Let $H=-\Delta+V $ be as in \eqref{eq:opH} and let $(\mu,\varphi)$ be an eigenpair of $H $. For $\delta>0,0<\mu\le  V_{\max}-\delta$, let $h_n$ be given as in \eqref{eq:Agmon-dis} for $\mu,\delta$ and $1/u_n$. There is an absolute constant $C>0$ such that for all $0<\alpha < 1/\sqrt{Cd}$, the following holds true: 
\begin{align}\label{eq:Agmon-loc}
    \sum_{h_n\ge 1}e^{2\alpha h_n}\varphi_n^2\, \le\, \frac{C_0} {\delta} \sum_{ {n\in {M}} }\varphi_n^2
\end{align}
where 
\begin{align}\label{eq:C0}
    C_0=\frac{4e^{2\alpha}d+(2+ 6C\alpha^2)e^{2\alpha} d\cdot V_{\max}}{1- Cd\alpha^2}
\end{align} 
\end{theorem}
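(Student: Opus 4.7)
The plan is to mimic the classical Agmon argument, with the effective-potential identity in Lemma~\ref{lem:eff} playing the algebraic role that integration by parts plays in the continuum. Set the Agmon weight $\eta_n := e^{\alpha h_n}$ and the auxiliary test function $f_n := \eta_n\varphi_n$. The central computation is to evaluate $\ipc{f}{Hf}_{\cH}$ in two different ways. On one hand, Lemma~\ref{lem:eff} applied to $f$ yields $\ipc{f}{Hf}_{\cH} = \sum_{n,i} u_{n+e_i} u_n \bigl(\nabla_i(\eta_n\varphi_n/u_n)\bigr)^2 + \sum_n \eta_n^2\varphi_n^2/u_n$, whose first term is non-negative. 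On the other hand, a direct ``discrete IMS'' manipulation (moving one factor of $\eta$ across $-\Delta$ and using $H\varphi=\mu\varphi$) gives $\ipc{f}{Hf}_{\cH} = \mu \sum_n \eta_n^2\varphi_n^2 + \sum_{n,i} \varphi_n \varphi_{n+e_i}\,(\nabla_i \eta_n)^2$. Equating the two expressions and discarding the non-negative gradient term produces the basic Agmon-type inequality $\sum_n \eta_n^2\varphi_n^2 (1/u_n - \mu) \le \sum_{n,i} \varphi_n \varphi_{n+e_i}(\nabla_i \eta_n)^2$.

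Next I control the right-hand side via the Agmon metric. For nearest neighbors, the definition~\eqref{eq:Agmon-metric} of $\rho_\mu$ implies $|h_{n+e_i} - h_n| \le \ln(1 + \sqrt{w_{n,i}})$ with $w_{n,i} := \min\{w_\mu(n), w_\mu(n+e_i)\}$. Exponentiating gives $\eta_{n+e_i}/\eta_n \in [(1+\sqrt{w_{n,i}})^{-\alpha}, (1+\sqrt{w_{n,i}})^{\alpha}]$, and the concavity inequality $(1+x)^\alpha \le 1 + \alpha x$ (valid for $0<\alpha\le 1$) then yields $(\nabla_i \eta_n)^2 \le C\alpha^2 w_{n,i}\,\eta_n \eta_{n+e_i}$. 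Combining with Cauchy--Schwarz $|\varphi_n\varphi_{n+e_i}|\le\tfrac12(\varphi_n^2+\varphi_{n+e_i}^2)$, the elementary $w_{n,i}\le w_\mu(n),w_\mu(n+e_i)$, and a symmetric index shift on the torus, the right-hand side is dominated by $Cd\alpha^2\sum_n \eta_n^2\varphi_n^2 w_\mu(n)$.

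To close the loop, split the left-hand side via $1/u_n - \mu = w_\mu(n) - (\mu - 1/u_n)_+$. The subtracted positive part is supported on $\{1/u_n<\mu\}\subset J_\delta(\mu)$, where by definition $h_n=0$ and hence $\eta_n=1$; it is therefore crudely bounded by $V_{\max}\sum_n\varphi_n^2$. For $\alpha^2 < 1/(Cd)$ one can absorb the right-hand side into the left and obtain $(1-Cd\alpha^2)\sum_n \eta_n^2\varphi_n^2 w_\mu(n) \le (\text{controllable boundary terms adding up to }C_0)\sum_n \varphi_n^2$. Finally, on $\{h_n\ge 1\}$ one has $h_n>0$, forcing $n\notin J_\delta(\mu)$ and hence $w_\mu(n)>\delta$; thus $\delta \sum_{h_n\ge 1} e^{2\alpha h_n}\varphi_n^2 \le \sum_n \eta_n^2\varphi_n^2 w_\mu(n) \le C_0\sum_n\varphi_n^2$, which is \eqref{eq:Agmon-loc}.

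The main obstacle is the ``discrete IMS'' computation in the first step: unlike the continuum, where chain and Leibniz rules are immediate, here one must verify by direct expansion the identity $\nabla_i(\eta_n^2\psi_n)\,\nabla_i\psi_n = (\nabla_i(\eta_n\psi_n))^2 - \psi_n\psi_{n+e_i}(\nabla_i\eta_n)^2$, which crucially mixes $\eta_n$ with $\eta_{n+e_i}$ and drives the whole calculation. Secondary difficulties are tracking the dimensional and $V_{\max}$ dependence to recover the explicit prefactor $C_0$ in~\eqref{eq:C0}: in particular the factor $e^{2\alpha}$ reflects that neighboring values of $\eta$ across $\partial J_\delta(\mu)$ can already differ by as much as $e^\alpha$, so the boundary-layer contribution has to be split off carefully when bounding the negative part of the LHS and the cross term $\varphi_n\varphi_{n+e_i}(\nabla_i\eta_n)^2$ on the RHS.
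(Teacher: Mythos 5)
Your proposal is correct and follows essentially the same route as the paper: your two-way evaluation of $\ipc{\eta\varphi}{H(\eta\varphi)}$ (via Lemma~\ref{lem:eff} on one side and the discrete IMS identity on the other) reproduces exactly the paper's Lemma~\ref{lem:eff-eigen}, i.e.\ identity \eqref{eq:eff-eigen} with $g=\eta$, and your nearest-neighbor Lipschitz bound on $h$ is the paper's Lemma~\ref{lem:Agmon-gradient}. The one real difference is the test function: the paper takes the cut-off weight $g_j=h_je^{\alpha h_j}$ for $h_j<1$ and $g_j=e^{\alpha h_j}$ for $h_j\ge 1$, so that $g$ vanishes on $J_\delta(\mu)$ and the term $\sum_n(1/u_n-\mu)\varphi_n^2g_n^2$ is automatically a sum of nonnegative contributions, at the price of a three-case boundary-layer analysis (the cases $I_1,I_2$) that produces the particular constant \eqref{eq:C0}. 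You instead keep $\eta_n=e^{\alpha h_n}$ everywhere and absorb the negative part on $\{1/u_n<\mu\}$, where $\eta_n=1$, by the crude bound $(\mu-1/u_n)_+\le\mu\le V_{\max}$; this is slightly simpler, avoids the case analysis, and yields a constant of the same order $dV_{\max}/(1-Cd\alpha^2)$, which is dominated by \eqref{eq:C0}, so the theorem as stated still follows.
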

\begin{remark}
 In the continuum setting, the Agmon metric and distance are usually defined in a slightly different way, see e.g. \cite{H}. Let $w(x)$ be a non-degenerate weight in $\R^d$. The Agmon metric in $\R^d$ associated to $w$ is given by 
 \begin{align*}
     \rho^{\rm conti}(x,y)=\inf_{\gamma}\int_0^1\sqrt{w(\gamma(t))}\, |\dot \gamma(t)|\, dt
 \end{align*}
 where the infimum taken over absolutely continuous paths $\gamma:[0,1]\to \R^d$ from $\gamma(0)=x$ to $\gamma(1)=y$. 
 
 If the weight function $w$ is small, the Agmon metric for the discrete case will have the same order as the Agmon metric for the continuous case since $\ln(1+\sqrt{w})\sim \sqrt{w}$. For large $w$, the logarithm in \eqref{eq:Agmon-metric} is very important. It is not only necessary for our proof of Theorem \ref{thm:Agmon} in the discrete setting, see also the technical Lemma \ref{lem:Agmon-gradient} below. Also, \eqref{eq:Agmon-loc} implies that, roughly speaking, the eigenfunction decays at a rate $|\varphi_n|\lesssim e^{-\alpha h_n}$. The logarithmic order of $h_n$ is consistent with the logarithmic order of the Lyapunov exponent for some famous tight-binding models, e.g. the almost Mathieu operator \cite{J,BJ} in the large coupling regime. The precise order of $h_n$ is also related to the so-called inverse participation ratio of a localized eigenfunction $\varphi$, $ {\rm IPR}(\varphi)=\left({\sum_{n\in M}\varphi_n^4}\right)/{\left(\sum_{n\in M}\varphi_n^2\right)^2}$. The analysis and the numerical experiment for the Agmon distance $h_n$, the Lyapunov exponent and the  inverse participation ratio are more complicated and subtle. We defer these to another paper. 
\end{remark}

To prove  Theorem \ref{thm:Agmon}, we need two more technical lemmas. The first one is a preliminary estimate of the discrete Agmon distance. 
\begin{lem}\label{lem:Agmon-gradient}
Suppose $n,m\in {M}$, and $|m-n|_1=1$, then
\begin{align}\label{eq:Agmon-gradient}
    |h_{m}-h_n|\le \ln \left(1+\sqrt{\min\{w_\mu(n),w_\mu({m})\}}\right)
\end{align}
As a consequence, there is an absolute constant $C>0$ such that for any $\alpha>0$,  $n,m\in {M}$, and $|m-n|_1=1$,
\begin{align}\label{eq:Agmon-exp-gradient}
    \left(e^{\pm(\alpha h_{m}-\alpha h_n)}-1\right)^2\, \le \, C\alpha^2  w_\mu(n).
\end{align}
\end{lem}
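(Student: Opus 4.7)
My plan is to establish \eqref{eq:Agmon-gradient} first as a graph-triangle inequality that falls out of the path-metric definition of $h_n$, and then to derive \eqref{eq:Agmon-exp-gradient} from it by one-variable elementary estimates.

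For part (1), recall $h_n=\inf_{p\in J_\delta(\mu)}\rho_\mu(n,p)$ is a path-metric infimum on the torus ${M}$. Fix any $p\in J_\delta(\mu)$ and any unit-step path $\gamma=(m=\gamma_1,\gamma_2,\ldots,\gamma_k=p)$ from $m$ to $p$. Since $|n-m|_1=1$, the concatenation $\widetilde\gamma=(n,m,\gamma_2,\ldots,\gamma_k)$ is also a valid unit-step path from $n$ to $p$, and its $\rho_\mu$-length exceeds that of $\gamma$ by exactly one new edge-weight coming from the edge $n\to m$, namely $\ln(1+\sqrt{\min\{w_\mu(n),w_\mu(m)\}})$. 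Hence $\rho_\mu(n,p)\le\rho_\mu(m,p)+\ln(1+\sqrt{\min\{w_\mu(n),w_\mu(m)\}})$; taking the infimum over $p\in J_\delta(\mu)$ yields $h_n\le h_m+\ln(1+\sqrt{\min\{w_\mu(n),w_\mu(m)\}})$. Swapping $n\leftrightarrow m$ gives the reverse inequality, proving \eqref{eq:Agmon-gradient}.

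For part (2), set $D:=h_m-h_n$ and $w:=w_\mu(n)$; by part (1) and the trivial bound $\min\{w_\mu(n),w_\mu(m)\}\le w$, we have $|D|\le\ln(1+\sqrt{w})$. To bound $e^{\alpha D}-1$, split on the sign of $D$. If $D\ge 0$, exponentiating gives $e^{\alpha D}\le(1+\sqrt{w})^\alpha$; in the regime relevant to Theorem~\ref{thm:Agmon} we have $\alpha\in(0,1]$, so the concavity of $t\mapsto t^\alpha$ yields $(1+\sqrt{w})^\alpha\le 1+\alpha\sqrt{w}$, whence $0\le e^{\alpha D}-1\le\alpha\sqrt{w}$. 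If $D<0$, combine the elementary bound $1-e^{-s}\le s$ ($s\ge 0$) with $|D|\le\ln(1+\sqrt{w})\le\sqrt{w}$ to obtain $|e^{\alpha D}-1|=1-e^{-\alpha|D|}\le\alpha|D|\le\alpha\sqrt{w}$. The argument for $e^{-\alpha D}-1$ is identical by the $D\leftrightarrow -D$ symmetry. Squaring yields \eqref{eq:Agmon-exp-gradient} with an absolute constant $C=1$ in this range of $\alpha$.

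The main subtlety, rather than a serious obstacle, is recognizing the role of the logarithm in the Agmon metric \eqref{eq:Agmon-metric}. It is precisely that logarithm which converts the exponential of the Agmon weight into a polynomial factor $(1+\sqrt{w})^\alpha$ on each edge, which is then linearized by the concavity step into the $\alpha\sqrt{w}$ form demanded by \eqref{eq:Agmon-exp-gradient}. A naive lattice transcription of the continuum Agmon metric (replacing $\ln(1+\sqrt{w})$ by $\sqrt{w}$) would produce an unbounded per-edge factor $e^{\alpha\sqrt{w}}$ that could not be absorbed into the Agmon identity used in the proof of Theorem~\ref{thm:Agmon}.
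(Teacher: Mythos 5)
Your proof is correct and follows exactly the route the paper indicates (its own ``proof'' is only a two-sentence remark omitting the details): the triangle inequality for the path metric $\rho_\mu$ together with the single-edge bound gives \eqref{eq:Agmon-gradient}, and exponentiating the logarithmic edge weight and linearizing via $(1+t)^\alpha\le 1+\alpha t$ and $1-e^{-s}\le s$ gives \eqref{eq:Agmon-exp-gradient}. Your restriction to $\alpha\in(0,1]$ is the right call: the statement's ``for any $\alpha>0$'' cannot hold with an absolute constant when $\alpha$ and $w_\mu(n)$ are both large (the worst case $|h_m-h_n|=\ln(1+\sqrt{w_\mu(n)})$ makes the left side grow like $w_\mu(n)^{\alpha}$), but Theorem \ref{thm:Agmon} only invokes the lemma for $\alpha<1/\sqrt{Cd}\le 1$, so nothing is lost.
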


\begin{remark}
 The proof of \eqref{eq:Agmon-gradient} is straightforward from the triangle inequality of $\rho_\mu$ and the definition of $h_n$. \eqref{eq:Agmon-exp-gradient} follows from the fact $|e^{|x|}-1|\lesssim \max(|x|,e^{|x|})$ for all $x$.  We omit the detail of the proof. 
\end{remark}

The next lemma is a direct application of Lemma \ref{lem:eff} to the eigenpair $(\mu,\varphi)$, which plays the key role in the proof of the main theorem. 
\begin{lem}\label{lem:eff-eigen}
Suppose $(\mu,\varphi)$ is an eigenfunction of $H$ in $\cH=\ell^2(M)$. Let  $u\in \cH$ be the landscape function for $H $ given by Theorem \ref{thm:landscape}. Then for any $g\in \cH $, the following identity holds true:
\begin{align}\label{eq:eff-eigen}
    &\sum_{n\in {M}}\, \left(\frac{1}{u_n}-\mu\right)\,   \varphi^2_ng^2_n +\sum_{n\in {M}}\sum_{1\le i\le d}\, u_{n+e_i}u_{n}\left(\frac{g_{n+e_i}\varphi_{n+e_i}}{u_{n+e_i}}-\frac{g_n\varphi_n}{u_{n}}\right)^2\\
 \qquad   =&  \sum_{n\in {M}}\sum_{1\le i\le d}
 \varphi_{n+e_i}\varphi_{n}\left(g_{n+e_i}-g_n\right)^2 \nonumber 
\end{align}
As a consequence of the positivity of $u$, one has 
\begin{align}\label{eq:eff-eigen1}
    \sum_{n\in {M}}\, \left(\frac{1}{u_n}-\mu\right)\,   \varphi^2_ng^2_n \ 
     \le       \frac{1}{2}\sum_{n\in {M}}\,  \varphi^2_{n}\sum_{|m-n|_1=1}\left(g_{m}-g_{n}\right)^2
\end{align}

\end{lem}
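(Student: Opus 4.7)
\medskip
\noindent\textbf{Proof plan.} The plan is to obtain \eqref{eq:eff-eigen} by computing the quadratic form $\ipc{g\varphi}{H(g\varphi)}_\cH$ in two different ways and equating the results. On the one hand, I will apply Lemma \ref{lem:eff}, specifically the identity \eqref{eq:eff1} with $f = g\varphi$, which produces
\begin{align*}
\ipc{g\varphi}{H(g\varphi)}_\cH = \sum_{n\in M}\sum_{1\le i\le d} u_{n+e_i}u_n \left(\nabla_i\frac{g_n\varphi_n}{u_n}\right)^2 + \sum_{n\in M}\frac{1}{u_n}(g_n\varphi_n)^2.
\end{align*}
This is exactly the left-hand side of \eqref{eq:eff-eigen} plus $\mu\sum_n g_n^2\varphi_n^2$ (once we rewrite $(\nabla_i(g\varphi/u))^2$ as the squared difference appearing in the claim). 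On the other hand, I will expand the same quadratic form using a discrete Leibniz-type identity.

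The algebraic core is the pointwise identity
\begin{align*}
(g_{n+e_i}\varphi_{n+e_i} - g_n\varphi_n)^2 = (g_{n+e_i}^2\varphi_{n+e_i} - g_n^2\varphi_n)(\varphi_{n+e_i}-\varphi_n) + \varphi_n\varphi_{n+e_i}(g_{n+e_i}-g_n)^2,
\end{align*}
which is verified by expanding both sides. Summing over $n\in M$ and $1\le i\le d$ and applying Green's identity \eqref{eq:Dform}, the left-hand side becomes $\ipc{g\varphi}{-\Delta(g\varphi)}_\cH$ and the first term on the right becomes $\ipc{g^2\varphi}{-\Delta\varphi}_\cH$. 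Adding the potential contribution $\sum_n v_n g_n^2\varphi_n^2 = \ipc{g^2\varphi}{V\varphi}_\cH$ to both sides converts this into
\begin{align*}
\ipc{g\varphi}{H(g\varphi)}_\cH = \ipc{g^2\varphi}{H\varphi}_\cH + \sum_{n\in M}\sum_{1\le i\le d}\varphi_n\varphi_{n+e_i}(g_{n+e_i}-g_n)^2.
\end{align*}
Using the eigenvalue equation $H\varphi = \mu\varphi$, the first term on the right is $\mu\sum_n g_n^2\varphi_n^2$. Equating this with the Lemma \ref{lem:eff} expression and moving $\mu\sum_n g_n^2\varphi_n^2$ to the left gives exactly \eqref{eq:eff-eigen}.

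For the consequence \eqref{eq:eff-eigen1}, I will drop the manifestly nonnegative second sum on the left of \eqref{eq:eff-eigen} (it is nonnegative because $u_n>0$ by Lemma \ref{lem:maxP}), and then bound the right-hand side above by invoking the elementary inequality $\varphi_n\varphi_{n+e_i} \le \tfrac{1}{2}(\varphi_n^2+\varphi_{n+e_i}^2)$, which is just $(\varphi_n-\varphi_{n+e_i})^2\ge 0$. Splitting into two sums and using the periodicity of $\varphi$ to shift $n \mapsto n-e_i$ in the second one allows me to recombine the $+e_i$ and $-e_i$ neighbors into a single neighbor sum $\sum_{|m-n|_1 = 1}(g_m-g_n)^2$ weighted by $\varphi_n^2$, with an overall factor of $\tfrac{1}{2}$, giving precisely \eqref{eq:eff-eigen1}.

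The main technical step is spotting and verifying the pointwise Leibniz-type identity above; everything else (Green's identity, the eigenvalue substitution, AM-GM, the periodic shift) is mechanical. No sign issues arise in \eqref{eq:eff-eigen}, but one must be careful in \eqref{eq:eff-eigen1} because $\varphi_n\varphi_{n+e_i}$ can be negative, so the bound $\varphi_n\varphi_{n+e_i}\le \tfrac12(\varphi_n^2+\varphi_{n+e_i}^2)$ (rather than any bound on $|\varphi_n\varphi_{n+e_i}|$) is what is needed.
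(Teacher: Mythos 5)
Your proposal is correct and is essentially the paper's own argument in a slightly different packaging: the paper plugs the pair $(g^2\varphi,\varphi)$ directly into the bilinear identity \eqref{eq:eff} and expands the resulting cross term, whereas you reach the same equation by evaluating $\ipc{g\varphi}{H(g\varphi)}_\cH$ once via the diagonal form \eqref{eq:eff1} and once via a discrete IMS/Leibniz identity --- but the pointwise identity you verify, $(ap-bq)^2=(a^2p-b^2q)(p-q)+pq(a-b)^2$, is exactly the algebraic step hidden inside the paper's expansion. Your derivation of \eqref{eq:eff-eigen1} (dropping the nonnegative $u$-term, the one-sided bound $\varphi_n\varphi_{n+e_i}\le\tfrac12(\varphi_n^2+\varphi_{n+e_i}^2)$, and the periodic shift $n\mapsto n-e_i$) coincides with the paper's verbatim.
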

\begin{proof}
We denote by $g^2\varphi$ the sequence $(g^2\varphi)_n=g^2_n\varphi_n$ for simplicity. \eqref{eq:eff} implies that 
\begin{align*}
\ipc{g^2\varphi}{H \varphi}=\sum_{n\in {M}}\,\sum_{1\le i\le d} u_{n+e_i} u_n\cdot  \nabla_i\frac{g^2_n\varphi_n}{u_{n}}\cdot  \nabla_i \frac{\varphi_{n}}{u_{n}}  \, + \,
\sum_{n\in {M}}\frac{1}{u_n}g^2_n \varphi^2_n
\end{align*}
Expanding the first term on the right hand side, one has 
\begin{align*}
&\sum_{n\in {M}}\,\sum_{1\le i\le d} u_{n+e_i} u_n\cdot  \nabla_i\frac{g^2_n\varphi_n}{u_{n}}\cdot  \nabla_i \frac{\varphi_{n}}{u_{n}}  \\
   = &\sum_{n\in {M}}\,\sum_{1\le i\le d} u_{n+e_i} u_n\left(\frac{g^2_{n+e_i}\varphi_{n+e_i}}{u_{n+e_i}}-\frac{g^2_n\varphi_n}{u_{n}}\right)\left(\frac{\varphi_{n+e_i}}{u_{n+e_i}}-\frac{\varphi_{n}}{u_{n}}\right)\\
 =&   \sum_{n\in {M}}\,\sum_{1\le i\le d}\, \left[\,  u_{n+e_i} u_n\,  \left(\frac{g^2_{n+e_i}\varphi^2_{n+e_i}}{u^2_{n+e_i}}+\frac{g^2_n\varphi^2_n}{u^2_{n}}\right)-
 \varphi_{n+e_i}\varphi_{n}\left(g^2_{n+e_i}+g^2_n\right)\,\right]\\
 =&   \sum_{n\in {M}}\sum_{1\le i\le d}\, u_{n+e_i}u_{n}\left(\frac{g_{n+e_i}\varphi_{n+e_i}}{u_{n+e_i}}-\frac{g_n\varphi_n}{u_{n}}\right)^2-\sum_{n\in {M}}\,\sum_{1\le i\le d}
 \varphi_{n+e_i}\varphi_{n}\left(g_{n+e_i}-g_n\right)^2
\end{align*}
On the other hand $H \varphi=\mu \varphi$ gives
$$\ipc{g^2\varphi}{H \varphi}= \ipc{g^2\varphi}{\mu \varphi}=\sum_{n\in {M}}\, \mu  g^2_n \varphi^2_n$$
\eqref{eq:eff-eigen} now follows directly by putting the above three equations together. 

To show \eqref{eq:eff-eigen1}, the only estimate needed is $\varphi_{n+e_i}\varphi_{n}\le \frac{1}{2}(\varphi_{n+e_i}^2+\varphi_{n}^2)$.  Then it is enough to shift the sum containing $\varphi^2_{n+e_i}$ using the periodic boundary condition for $\varphi,g\in \cH $:
\begin{align*}
    \sum_{n\in {M}}\,  \varphi^2_{n+e_i}\left(g_{n+e_i}-g_n\right)^2=\sum_{n\in {M}}\,  \varphi^2_{n}\left(g_{n}-g_{n-e_i}\right)^2
\end{align*}
Summing over all $n\in {M}$ and $1\le i \le d$, one has 
\begin{align*}
    \sum_{n\in {M}}\sum_{1\le i\le d}
 \varphi_{n+e_i}\varphi_{n}\left(g_{n+e_i}-g_n\right)^2\le & 
 \frac{1}{2}\sum_{n\in {M}}\sum_{1\le i\le d}
 \varphi_{n}^2\left(g_{n-e_i}-g_n\right)^2+\frac{1}{2}\sum_{n\in {M}}\sum_{1\le i\le d}
 \varphi_{n}^2\left(g_{n+e_i}-g_n\right)^2
\end{align*}
Then 
\begin{align*}
    \sum_{1\le i\le d}\left[\left(g_{n+e_i}-g_n\right)^2+\left(g_{n}-g_{n-e_i}\right)^2\right]
    =\sum_{|m-n|_1=1}\left(g_{m}-g_{n}\right)^2
\end{align*}
implies \eqref{eq:eff-eigen1} by dropping the positive term containing $u_{n+e_i}u_n$ on the left hand side of \eqref{eq:eff-eigen}.
\end{proof}

Now we are ready to prove our main result. 
\begin{proof}[Proof of Theorem \ref{thm:Agmon}]
For $0<\mu\le V_{\max}-\delta$,  
Lemma \ref{lem:Agmon-gradient} and Lemma \ref{lem:maxP} imply that for all $n,m\in {M}, |m-n|_1=1 $, 
\begin{align}\label{eq:h1}
|h_m-h_n|\le \sqrt{\left(\frac{1}{u_n}-\mu\right)_+}  \le  \sqrt{V_{\max}}\\
\left(e^{\pm(\alpha h_{m}-\alpha h_n)}-1\right)^2\, \le \, C\alpha^2 \left(\frac{1}{u_n}-\mu\right)_+  \le C\alpha^2 V_{\max}
\end{align}
where $C$ is the absolute constant in Lemma \ref{lem:Agmon-gradient}. 

Now we are going to apply Lemma \ref{lem:eff-eigen} to the following test function 
\begin{align}
   g_j= \left\{
\begin{array}{ll}
     h_je^{\alpha h_j}, & \ {\rm if}\ h_j<1 \\
      e^{\alpha h_j}, &  \ {\rm if}\ h_j\ge 1 \\
\end{array} 
\right.
\end{align}

We need to estimate $g_m-g_n$, for $|m-n|_1=1$, in the following three cases. Let $I_1=\{j\in{M}\, |\, h_j\ge1\},\ I_2=\{j\in{M}\, |\, h_j<1\}$ and $ B_1(n)=\{m\in{M}\, |\,|m-n|_1=1\} $

{\noindent \bf Case I: $n\in I_1, m\in I_1\cap B_1(n)$}. 
\begin{align*}
    \left(g_{m}-g_n\right)^2=e^{2\alpha h_{n}}\left(e^{\alpha h_{m}-\alpha h_{n}}-1\right)^2 \le  C\alpha^2\, e^{2\alpha h_{n}}\, \left(\frac{1}{u_n}-\mu\right)_+ 
\end{align*}

{\noindent \bf Case II: $n\in I_2, m\in I_2\cap B_1(n)$}. 
\begin{align*}
    \left(g_{m}-g_n\right)^2=&\left|(h_m-h_n)e^{\alpha h_m}+h_n(e^{\alpha h_m}-e^{\alpha h_n}) \right|^2\\
    \le & 2e^{2\alpha h_m}\left(h_m-h_n \right)^2
    +2h_n^2e^{2\alpha h_{n}}\left(e^{\alpha h_{m}-\alpha h_{n}}-1\right)^2\\
    \le & 2e^{2\alpha}V_{\max}+2e^{2\alpha}\cdot C\alpha^2V_{\max}=2e^{2\alpha}(1+ C\alpha^2)V_{\max}
\end{align*}

{\noindent \bf Case III: $n\in I_1, m\in I_2\cap B_1(n)$ or $n\in I_2, m\in I_1\cap B_1(n)$}.

Suppose $h_n\ge 1,h_m< 1$. 
\begin{align*}
    \left(g_{m}-g_n\right)^2=&\left|(h_m-1)e^{\alpha h_m}+(e^{\alpha h_m}-e^{\alpha h_n}) \right|^2\\
    \le & 2e^{2\alpha h_m}\left(h_m-1 \right)^2
    +2e^{2\alpha h_{m}}\left(1-e^{\alpha h_{n}-\alpha h_{m}}\right)^2\\
    \le & 2e^{2\alpha}+2C\alpha^2e^{2\alpha}\cdot V_{\max}
\end{align*}
The estimate for $h_n< 1,h_m\ge 1$ is the same. 

We now use these three bounds to estimate the right hand side of \eqref{eq:eff-eigen1}. Note that $\#(I_i\cap B_1(n))\le \#( B_1(n))\le 2d$ for $i=1,2$ and all $n\in {M}$. 
\begin{align*}
  \sum_{n\in {M}}\,  \varphi^2_{n}\sum_{|m-n|_1=1}\left(g_{m}-g_{n}\right)^2
= & \sum_{n\in I_1}\,  \varphi^2_{n}\sum_{m\in I_1\cap B_1(n)}\left(g_{m}-g_{n}\right)^2+\sum_{n\in I_2}\,  \varphi^2_{n}\sum_{m\in I_2\cap B_1(n)} \left(g_{m}-g_{n}\right)^2\\
&+\sum_{n\in I_1}\,  \varphi^2_{n}\sum_{m\in I_2\cap B_1(n)} \left(g_{m}-g_{n}\right)^2+\sum_{n\in I_2}\,  \varphi^2_{n}\sum_{m\in I_1\cap B_1(n)}\left(g_{m}-g_{n}\right)^2\\
\le & 2dC\alpha^2\, \sum_{h_n\ge 1}\,  \varphi^2_{n}\, e^{2\alpha h_{n}}\, \left(\frac{1}{u_n}-\mu\right)_+ \\
&+2d\left( 2e^{2\alpha}(1+ C\alpha^2)V_{\max}\,+4e^{2\alpha}+4C\alpha^2e^{2\alpha} V_{\max} \right)\sum_{n\in {M}}\,  \varphi^2_{n}
\end{align*}

On the other hand, note that $\{\frac{1}{u_n}\le \mu\}\subset \{\frac{1}{u_n}\le \mu+\delta\}=J_\delta(\mu)$. Therefore, $\frac{1}{u_n}\le \mu$ implies that $h_n=0<1$ and $g_n=h_ne^{\alpha h_n}=0$. Hence, 
\begin{align*}
    \sum_{n\in {M}}\, \left(\frac{1}{u_n}-\mu\right)\,   \varphi^2_n g^2_n
    = \sum_{n\in {M}}\, \left(\frac{1}{u_n}-\mu\right)_+\,   \varphi^2_n g^2_n 
    \ge &\sum_{h_n\ge 1}\, \left(\frac{1}{u_n}-\mu\right)_+\,   \varphi^2_n g^2_n \\
    =&\sum_{h_n\ge 1}\, \left(\frac{1}{u_n}-\mu\right)_+\,   \varphi^2_n e^{2\alpha h_n}
\end{align*}

Combing these estimates with \eqref{eq:eff-eigen1}, one has 
\begin{align*}
   (1- Cd\alpha^2)\, \sum_{h_n\ge 1}\, \varphi^2_{n}g^2_n\, \left(\frac{1}{u_n}-\mu\right)_+
   \le  C_1 d  \sum_{n\in {M}}\, \varphi^2_{n}
\end{align*}
where $C_1=4e^{2\alpha}+(2+ 6C\alpha^2)e^{2\alpha}V_{\max}$. 

Finally, it is easy to check that $h_n\ge 1 $ implies that $1/u_n> \mu+\delta$. Therefore, 
\begin{align*}
   (1- Cd\alpha^2)\delta\, \sum_{h_n\ge 1}\, \varphi^2_{n}e^{2\alpha h_n}\, \le 
    C_1d\sum_{n\in {M}}\, \varphi^2_{n}
\end{align*}
which immediately gives \eqref{eq:Agmon-loc}. 
\end{proof}

\subsection{Dual landscape for the high energy modes}\label{sec:dual-LL}
In this part, we are going to show that the theory
of the localization landscape also  explains the localization of high energy states oscillating see Figure \ref{fig:BWphi290}. 
To this end, we study the symmetric property of the discrete Schr\"odinger operator $H $. For $\varphi\in \cH=\ell^2({M})$, let 
\begin{align}\label{eq:dualU}
    \wt{\varphi}_n=(-1)^{s(n)}\varphi_n, \ n\in {M}
\end{align}
where $s(n)=\sum_{j=1}^d n_j$ for $n=(n_1,n_2,\cdots,n_d)\in \Z^d$.  Recall that $M=\Z^d/K\Z^d$ is the equivalent class and $\varphi\in \ell^2(M)$ implies the periodic condition \eqref{eq:bdry-peri} $\varphi_n=\varphi_{n+Ke_i}$. For $\wt{\varphi}$ given by \eqref{eq:dualU}, it is easy to check that 
\begin{align}
   \wt{\varphi}_{n+Ke_i}=(-1)^{s(n+Ke_i)}\varphi_{n+Ke_i}=(-1)^K  (-1)^{s(n)}\varphi_{n}=(-1)^K \wt{\varphi}_{n}
\end{align}
In other words, the  transformation defined in \eqref{eq:dualU} will  change the periodic boundary condition on the finite box ${M}$, unless $K$ is even. We will restrict to the periodic boundary condition in this part and only consider the case where $K$ is an even integer and $\wt{\varphi} \in \ell^2(M)$. 

Now suppose  $(\mu,\varphi)$ is an  eigenpair of $H=-\Delta+V$ in $\cH=\ell^2({M})$. By substituting $\wt{\varphi}_n=(-1)^{s(n)}\varphi_n$ into the equation 
\begin{align*}
    -\sum_{1\le i\le d}\left(\varphi_{n+e_i}+\varphi_{n-e_i}\right)\, +\, 2d\varphi_n+v_n\varphi_n=\mu \varphi_n
\end{align*}
and using the the fact $s(n\pm e_i)=s(n)\pm 1$, one has 
\begin{align*}
    \sum_{1\le i\le d}\left(\wt{\varphi}_{n+e_i}+\wt{\varphi}_{n-e_i}\right)\, +\, 2d\wt{\varphi}_n+v_n\wt{\varphi}_n=\mu \wt{\varphi}_n
\end{align*}
It is easy to check that the above equation can be rewritten as 
\begin{align}\label{eq:dual-eq}
  -\sum_{1\le i\le d}\left(\wt{\varphi}_{n+e_i}+\wt{\varphi}_{n-e_i}\right)\, +\, 2d\wt{\varphi}_n+(V_{\max}-v_n)\wt{\varphi}_n=(4d+V_{\max}-\mu)\wt{\varphi}_n
\end{align}
Equivalently, one has for $\wt{\varphi}\in \cH$ given by \eqref{eq:dualU},  
\begin{align*}
    (-\Delta +\wt V)\wt{\varphi}=\wt \mu \wt{\varphi}
\end{align*}
 where  $\wt{V}=\{\wt v_n\}_{n\in {M}}, \wt v_n=V_{\max}-v_n$ is a non-negative potential acting in $\cH$  in the same way as $V$, and 
 \begin{align}\label{eq:dual-ev}
     \wt{\mu}:=4d+V_{\max}-\mu
 \end{align}

In other words,  $(\mu,\varphi)$ is an eigenpair of $H $ iff $(\wt \mu,\wt{\varphi})$ given as above is an eigenpair of a dual operator $\wt{H} :=-\Delta +\wt V$, defined by the left hand side of \eqref{eq:dual-eq}.  

If $K$ is odd, then the dual state $\wt{\varphi}$ defined  in \eqref{eq:dualU} will satisfy the so-called anti-periodic (AP) boundary condition instead, which does not belong to $\ell^2(M)$. In this case, the dual operator and the dual eigenvalue equation turns out to be
\begin{align}
     H^{\rm AP}\wt{\varphi}=\wt \mu \wt{\varphi},\ \ \  H^{\rm AP}=-\Delta^{\rm AP}+\wt V
\end{align}
e.g., one can easily find the matrix representation of $-\Delta^{\rm AP}$ acting on $\R^3$: 
\begin{align*}
    -\Delta^{\rm AP}=\begin{pmatrix}
    2 & -1 & 1 \\
    -1 & 2 & -1 \\
    1 & -1 & 2 
    \end{pmatrix}
\end{align*}
Unfortunately, Schr\"odinger operator $H^{\rm AP}$ with anti-periodic boundary condition does not satisfy the maximal principle Lemma \ref{lem:maxP}, e.g., 
\begin{align*}
    \begin{pmatrix}
    2 & -1 & 1 \\
    -1 & 2 & -1 \\
    1 & -1 & 2 
    \end{pmatrix}\begin{pmatrix}
    -1 \\
    1 \\
    3 
    \end{pmatrix}=\begin{pmatrix}
    0 \\
    0 \\
    4 
    \end{pmatrix}
\end{align*}
So far, the localization landscape theory is only known to work for operators satisfying certain types of maximal principle. We do not plan to extend the framework  of the localization landscape theory to operators such as $H^{\rm AP}$.  Below, we will only restrict to cube ${M}$ with even size $K$, when we are working on dual landscape theory. Note that the original landscape in Theorem \ref{thm:Agmon} has no restriction on $K$.   

Now let us consider the dual operator $\wt H  $ for even $K$. Suppose $(\wt \mu,\wt{\varphi})$ is an eigenpair of $\wt H $. Note that  $0\le \wt V\le V_{\max}$. Theorem \ref{thm:landscape} implies there is a dual landscape $\wt u$ satisfying $(\wt H  \wt u)_n= 1$ correspondingly. Now we can define the Agmon distance $\wt h_n$ with respect to this dual landscape function $\wt u$ and $\wt \mu$ exactly in the same way as \eqref{eq:Agmon-dis}. More precisely, the dual effective well associated to $\wt u$ is defined as
\begin{align}\label{eq:well-dual}
    \wt J_\delta(\wt \mu):=\left \{ {n\in {M}} :\ \frac{1}{\wt u_n}\le \wt \mu+\delta \right\}=\left \{ {n\in {M}} :\ \frac{1}{\wt u_n}\le 4d+V_{\max}-\mu+\delta \right\}
\end{align}
And the Agmon metric $\wt \rho_{\wt \mu}(n,m)$ will be defined as in \eqref{eq:Agmon-metric}, associated to $\left(1/\wt u_n-\wt \mu\right)_+$. 

The dual Agmon distance (weight) to is defined as 
\begin{align}\label{eq:Agmon-dis-dual}
    \wt h_n=\inf_{m\in \wt J_\delta(\wt \mu)}\wt \rho_{\wt \mu}(n,m)
\end{align}

Theorem \ref{thm:Agmon} implies that for $\wt \mu\le V_{\max}-\delta$, 
\begin{align*}
     \sum_{\wt h_n\ge 1}e^{2\alpha \wt h_n}\wt{\varphi}_n^2\, \le\, \frac{C_0} {\delta} \sum_{ {n\in {M}} }\wt{\varphi}_n^2
\end{align*}
where $C_0$ is the same constant given in \eqref{eq:C0}. The transform \eqref{eq:dualU} does not change the square amplitude of the states, i.e., $\wt{\varphi}_n^2=\varphi_n^2$ where $(\mu,\wt{\varphi})$ is the eigenpair of the original operator $H $. Note that \eqref{eq:dual-ev} implies $\mu\ge 4d+\delta$. In summary, this dual landscape  $\wt u$ provides exponential decay estimates of Agmon type  near the top of the spectrum of $H  $:
\begin{theorem}\label{thm:Agmon-dual} 
 Let $K\in \N,K\ge 3$ is an even integer and let $H=-\Delta+V $ be the Schr\"odinger operator on $\ell^2({M})$, as defined in \eqref{eq:opH}. Let $\wt H=-\Delta+V_{\max}-V$ and let $\wt u$ be the dual landscape function satisfying $(\wt H \wt u)_n=1$. Suppose $(\mu,\varphi)$ is an eigenpair of $H $.  For $\delta>0,\mu\ge 4d+\delta$, let $\wt \mu=4d+V_{\max}-\mu$ and let $\wt h_n$ be defined as in \eqref{eq:Agmon-dis-dual} for $\wt u, \wt \mu,\delta$. For all $0<\alpha < 1/\sqrt{Cd}$, the following holds true: 
\begin{align}\label{eq:Agmon-loc-dual}
    \sum_{h_n\ge 1}e^{2\alpha \wt h_n}\varphi_n^2\, \le\, \frac{C_0} {\delta} \sum_{ {n\in {M}} }\varphi_n^2
\end{align}
where $C,C_0$ are the same constants as in Theorem \ref{thm:Agmon}.

\end{theorem}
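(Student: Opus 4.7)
The plan is to reduce Theorem \ref{thm:Agmon-dual} to the already-established Theorem \ref{thm:Agmon} via the parity/chessboard transformation
\begin{equation*}
\wt\varphi_n = (-1)^{s(n)}\varphi_n, \qquad s(n)=\sum_{j=1}^d n_j,
\end{equation*}
exactly as introduced in \eqref{eq:dualU}. First I would check that this map is a well-defined involution on $\cH=\ell^2(M)$. Because $K$ is assumed to be even, $s(n+Ke_i)=s(n)+K$ has the same parity as $s(n)$, so $\wt\varphi$ satisfies the periodic boundary condition \eqref{eq:bdry-peri} and lies in $\cH$. This is the one place where the evenness of $K$ is essential: for odd $K$ the image would be anti-periodic, landing in a space on which the maximum principle (Lemma \ref{lem:maxP}) fails, as the $3\times 3$ example in Section \ref{sec:dual-LL} illustrates.

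Next I would verify by direct substitution the computation already outlined around \eqref{eq:dual-eq}: if $H\varphi=\mu\varphi$, then since $(-1)^{s(n\pm e_i)}=-(-1)^{s(n)}$, the hopping term picks up an overall minus sign, converting
\begin{equation*}
-\sum_{1\le i\le d}(\varphi_{n+e_i}+\varphi_{n-e_i})+2d\varphi_n+v_n\varphi_n=\mu\varphi_n
\end{equation*}
into
\begin{equation*}
-\sum_{1\le i\le d}(\wt\varphi_{n+e_i}+\wt\varphi_{n-e_i})+2d\wt\varphi_n+(V_{\max}-v_n)\wt\varphi_n=(4d+V_{\max}-\mu)\wt\varphi_n.
\end{equation*}
Thus $(\wt\mu,\wt\varphi)$ with $\wt\mu=4d+V_{\max}-\mu$ is an eigenpair of $\wt H=-\Delta+\wt V$, where $\wt v_n=V_{\max}-v_n$ still satisfies $0\le \wt v_n\le V_{\max}$. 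By Theorem \ref{thm:landscape} applied to $\wt H$, the dual landscape $\wt u$ exists, is unique, and is pointwise positive (Lemma \ref{lem:maxP}).

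Now I would simply invoke Theorem \ref{thm:Agmon} for the dual data $(\wt H,\wt V,\wt u,\wt\mu,\wt\varphi)$. The hypothesis $\mu\ge 4d+\delta$ translates exactly into $\wt\mu\le V_{\max}-\delta$, which is precisely the spectral window in which Theorem \ref{thm:Agmon} operates. The Agmon weight constructed from $1/\wt u_n$ and $\wt\mu$ via \eqref{eq:allowed-region}--\eqref{eq:Agmon-dis} is by definition $\wt h_n$ of \eqref{eq:Agmon-dis-dual}. Therefore, for $0<\alpha<1/\sqrt{Cd}$,
\begin{equation*}
\sum_{\wt h_n\ge 1} e^{2\alpha \wt h_n}\wt\varphi_n^2 \le \frac{C_0}{\delta}\sum_{n\in M}\wt\varphi_n^2,
\end{equation*}
with the same absolute constants $C,C_0$ as in Theorem \ref{thm:Agmon}.

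The final step is the trivial observation that the chessboard sign does not affect square amplitudes: $\wt\varphi_n^2=\varphi_n^2$ for every $n\in M$. Substituting on both sides yields \eqref{eq:Agmon-loc-dual} and finishes the proof. There is essentially no analytic obstacle here; the entire argument is a symmetry reduction. The only subtle point is the parity bookkeeping in the first step (and the corresponding parity restriction on $K$), which is why the dual statement, unlike Theorem \ref{thm:Agmon}, cannot be stated for arbitrary $K$.
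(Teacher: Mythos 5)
Your proposal is correct and follows essentially the same route as the paper: the chessboard transformation $\wt\varphi_n=(-1)^{s(n)}\varphi_n$, the verification that evenness of $K$ preserves periodicity, the identification of $(\wt\mu,\wt\varphi)$ as an eigenpair of $\wt H$ with $\wt\mu=4d+V_{\max}-\mu$, an application of Theorem \ref{thm:Agmon} to the dual data, and the observation $\wt\varphi_n^2=\varphi_n^2$. No gaps.
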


\section{Dirichlet boundary condition}\label{sec:diri}
In this Section, we discuss the generalization of the results to the Dirichlet boundary condition.  Let $M= [1,K]^d\cap\Z^d$ be a cube in $\Z^d$, with $K\in\N$ points in each direction. We denote by
\begin{align}\label{eq:o-bdry}
   \partial M=\left\{\ n\in\Z^d:\  n_i=0\ {\rm or}\  K+1\ \  \textrm{for (only) one }\ 1\le i\le d\ \right\} 
\end{align}
 the  outer boundary of ${M}$, and denote by 
 \begin{align}\label{eq:i-bdry}
   \partial^\circ M=\left\{\ n\in\Z^d:\  n_i=1\ {\rm or}\  K\ \textrm{for some}\ 1\le i\le d\ \right\} 
\end{align}  the inner boundary. We also denote by $\overline M= M\cup \partial M$. See  a cube of size $K=5$ in $\Z^2$ and its outer and inner boundaries in Figure \ref{fig:DisBound} for example.
\begin{figure}
    \centering
    \includegraphics[width=0.5\textwidth]{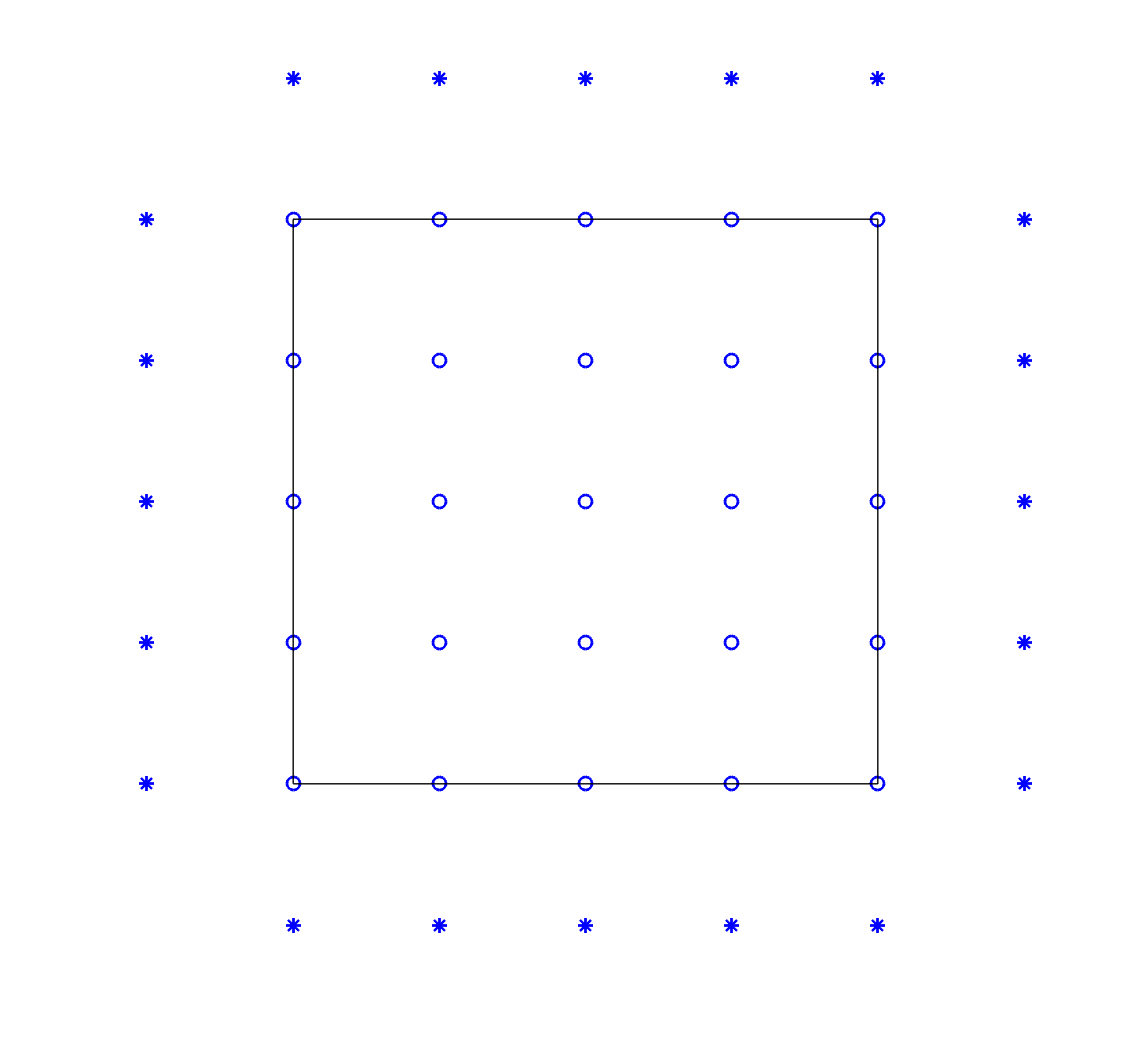}
    \caption{The collection of the circled dots is the box $M=\{1,2,3,4,5\}\times\{1,2,3,4,5\} \subset \Z^2$. The collection of the stared dots is the outer boundary of $M$. The circled dots connected by the straight line form the inner boundary of $M$.}
    \label{fig:DisBound}
\end{figure}

Let $\Delta$ be the Laplacian on $\ell^2(\Z^d)$ as defined in \eqref{eq:Lap}. We denote by $\Delta^{\rm D}$ the Laplacian on $\ell^2(M)$, with Dirichlet boundary condition, given by
\begin{align}\label{eq:Lap-diri}
   (\Delta^{\rm D}\phi)_n=\begin{cases}\sum\limits_{|m-n|_1=1}\phi_m\, -\, 2d\phi_n, \ \ & n\notin \partial^\circ M \\
   \sum\limits_{|m-n|_1=1,m\in M}\phi_m\, -\, 2d\phi_n, \ \ & n\in \partial^\circ M
   \end{cases}
\end{align}
If one consider the `extended' vector space 
\begin{align}
    \overline{\cH}=\left\{\, \bar \phi=(\phi_n)_{n\in \overline{M}}  \ \left| \right. \  (\phi_n)_{n\in M}\in \ell^2(M),\ \  
    (\phi_n)_{n\in \partial M}=0
    \right\}
\end{align}
then clearly, $(\Delta^{\rm D}\overline{\phi})_n=\sum_{|m-n|_1=1}\overline \phi_m\, -\, 2d\overline \phi_n, n\in M$. In other words, $\Delta^{\rm D}$ is the direct restriction of $\Delta$ on the cube $M$ by assigning zero value outside of $ M$, where the term in $\Delta$ connecting $M$ and $M^C$ are completely removed. This is one of the reasonable analogues \footnote{Such boundary condition is sometimes referred as the `simple' boundary condition in the context of tight-binding Schr\"odinger operators, which does not satisfy the so-called  Dirichlet-Neumann bracketing as in the continuous case.  We are not using Dirichlet-Neumann bracketing in the current work and therefore are not extending the discussion on this. For other boundary conditions in the tight-binding model, we refer readers to e.g. \cite{Kir}. } of the Dirichlet boundary condition in $\R^d$. Similar to the periodic case,  $-\Delta^{\rm D}$ acting on the finite dimensional space $\ell^2(M)$ has its matrix representation. Take $M=\{1,2,\cdots,K\}\subset\Z^1$ for example,  $\Delta^{\rm D}$  can be identified as the following $K\times K$ matrix, acting on $(\phi_1,\cdots,\phi_K)^T$, 
\begin{align}\label{eq:LapDiri}
    -\Delta^D=\begin{pmatrix}
	2 & -1 &0  & \cdots &  0 \\
	-1 &   2 &\ddots   &  \vdots \\
	0 & \ddots& \ddots&\ddots  & 0 \\
	\vdots  &\ddots & \ddots&  2   &-1 \\
	0  &\cdots & 0&-1 & 2
\end{pmatrix}_{K\times K}
\end{align}
The Schr\"odinger operator that we are interested is given by $H^{\rm D}=-\Delta^{\rm D}+V$, where $V$ is the same multiplication operator as in the periodic case \eqref{eq:opH}. The framework in the periodic case allows one to obtain similar  Agmon type of localization results for $H^{\rm D}$, with some slightly different technicalities. We will briefly explain the difference for the Dirichlet case below.  

We use the same notations of directional derivatives and gradient as in Section \ref{sec:LL} for vectors in $\ell^2(\Z^d)$. Restricting on a finite box,  one can easily check that the Green's identity \eqref{eq:Dform} in the Dirichlet case looks slightly different since the boundary is polluted.  Actually, for $f,\in \cH=\ell^2(M)$, it is more convenient to consider their extension $\bar f\in \overline{\cH}$ where $\bar f|_{M}=f$ and $\bar f|_{\partial M}=0$.  By the same computation as in Lemma \ref{lem:Dform}, one has
\begin{align}\label{eq:Dform-Diri}
    \ipc{g}{-\Delta^{\rm D} f}_{\cH}  =
    \sum_{\mathclap{ \substack{ n\in\partial^\circ{M}\ {  \rm and}\\ n\pm  e_i\in \partial {M}   }}}g_nf_n+\sum_{\mathclap{ \substack{n\in M\ {\rm and}\\ n+e_i\in {M}}}}(\nabla  g \cdot \nabla  f)(n) 
    =\sum_{\mathclap{ \substack{n\in {M}\ {\rm or}\\ n+e_i\in {M}}}}(\nabla \bar g \cdot \nabla \bar f)(n)  
\end{align}
Clearly,  $-\Delta^{\rm D}$ is strictly positive since $\ipc{f}{-\Delta^{\rm D}f}>0$, for any $0\neq f\in \ell^2(M)$.  Therefore $H^{\rm D}=-\Delta^{\rm D}+V$ is invertible on $\ell^2({M})$ since $V$ is non-negative, and the landscape function is well defined through the equation $(H^{\rm D}u)_n=1,n\in M$. The positivity of $u_n$ is again from the maximum principle for $H^{\rm D}$, whose  proof is similar to Lemma \ref{lem:maxP}. We omit the details here.  Now let $\vec 1\in \cH$ be the constant one vector as usual. Let $f=u-\beta^{-1} \cdot \vec 1$, where
\begin{align}\label{eq:beta}
    \beta=V_{\max}+d
\end{align}
It is easy to check that  $(\Delta^{\rm D}\vec 1)_n=0$ for $n\in M\backslash \partial^\circ {M}$, which implies that 
\begin{align*}
    (H^{\rm D} f)_n=1-\beta^{-1}   \cdot v_n\ge0, \ \ n\in {M}\backslash \partial^\circ {M}.
\end{align*}

If $n\in \partial^\circ {M}$, by the second line in \eqref{eq:Lap-diri} , one has
\begin{align*}
    -(\Delta^{\rm D}\vec 1)_n=-\left(\sum_{|m-n|_1=1,
    m\in  {M}}1\right)\, +2d =&2d-\# \left\{ m\in  {M}, |m-n|_1=1    \right\}\\
    =& \,\# \left\{ m\notin  {M}, |m-n|_1=1    \right\}:=k_n
\end{align*}
It is easy to check that for $n\in \partial^\circ {M}$, $1\le k_n \le d$. Therefore, 
\begin{align*}
      (H^{\rm D} f)_n=&(H^{\rm D}u)_n-\beta^{-1}\left(-(\Delta^{\rm D}\vec 1)_n+(V \vec 1)_n\right)\\
      =&1-\beta^{-1}   \cdot (k_n+v_n)\ge 1-\beta^{-1}  \cdot (d+V_{\max})\ge 0, \ \ n\in  \partial^\circ {M}.
\end{align*}
By the maximal principle, $f_n=u_n-\beta^{-1}  \ge 0$ for all $ {n\in {M}} $, which gives a slightly modified lower bound compared with \eqref{eq:u-lower}:
\begin{align}\label{eq:u-lower-diri}
    \min_{n\in M}u_n\ge \beta^{-1}=\frac{1}{V_{\max}+d}
\end{align}

Using the Green's identity \eqref{eq:Dform-Diri}, one can obtain uncertainty principle Lemma \ref{lem:eff} and Lemma \ref{lem:eff-eigen} for the Dirichlet case in the same way, by setting the terms to be zero if $n+e_i\notin M$. We summarize the results as the following lemma and omit the proof. 
\begin{lem}
Suppose $(H^{\rm D}  u)_n=1,n\in {M}$ for $u\in \cH=\ell^2(M)$.  For any $f,g\in \cH $,  one has
\begin{align}
    \ipc{g}{H^{\rm D}  f}_\cH
         =&\sum_{\mathclap{ \substack{1\le i\le d \\ n\ {\rm and}\ n+e_i\in {M}}}} \left(u_{n+e_i} u_n\cdot  \nabla_i\frac{ g_{n}}{u_n}\cdot  \nabla_i \frac{ f_{n}}{u_n} \right)\,+\, \sum_{n\in {M}}\, \frac{1}{u_n} \, g_{n}f_{n}
\end{align}

If, in addition, $(\varphi,\mu)$ is an eigenpair of $H^{\rm D}$ on $\cH$, then for any $g\in \cH$, the following is true:
\begin{align}\label{eq:eff-eigen2-diri}
    \sum_{n\in \Lambda^d}\, \left(\frac{1}{u_n}-\mu\right)\,   \varphi^2_ng^2_n \ 
     \le       \frac{1}{2}\sum_{n\in \Lambda^d}\,  \varphi^2_{n}\sum_{m\in \Lambda^d,  |m-n|_1=1}\left(g_{m}-g_{n}\right)^2
\end{align}
\end{lem}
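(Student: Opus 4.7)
The plan is to prove both assertions by mimicking the proofs of Lemma \ref{lem:eff} and Lemma \ref{lem:eff-eigen} from the periodic case, with a single systematic modification: replace every vector $f, g, u \in \cH = \ell^2(M)$ by its trivial extension $\bar f, \bar g, \bar u \in \overline{\cH}$ that vanishes on $\partial M$. Under this extension, as already noted in the paragraph defining $\overline{\cH}$, one has $(\Delta^{\rm D} u)_n = (\Delta \bar u)_n$ for $n \in M$, so the landscape equation $(H^{\rm D} u)_n = 1$ becomes
\begin{align*}
\frac{\sum_{|m-n|_1=1} \bar u_m}{\bar u_n} + \frac{1}{\bar u_n} = 2d + v_n, \quad n \in M,
\end{align*}
where terms with $m \in \partial M$ simply contribute $0$ to the numerator. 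This is the Dirichlet analogue of \eqref{eq:temp99}.

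For the first identity, I would repeat the algebraic manipulation in the proof of Lemma \ref{lem:eff} \emph{verbatim}, writing $\ipc{g}{H^{\rm D} f}_\cH = \ipc{g}{Vf}_\cH + \ipc{g}{-\Delta^{\rm D} f}_\cH$ and expanding via the above landscape identity. The one step that needs checking is the reindexing that used periodicity, namely $\sum_{1 \le n_i \le K} g_n f_{n-e_i} = \sum_{1 \le n_i \le K} g_{n+e_i} f_n$. In the Dirichlet setting, substituting $m = n - e_i$ yields
\begin{align*}
\sum_{n \in M} \bar g_n \bar f_{n-e_i} = \sum_{m \in M} \bar g_{m+e_i} \bar f_m,
\end{align*}
and this identity holds without any boundary correction because $\bar f$ and $\bar g$ vanish on $\partial M$, so the ``missing'' terms on either side are zero. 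After completing the square exactly as in Lemma \ref{lem:eff}, all factors $u_{n+e_i}$ with $n+e_i \notin M$ carry a factor of $\bar u_{n+e_i} = 0$, so the final gradient sum is automatically restricted to pairs $(n, n+e_i)$ with both endpoints in $M$, matching the stated formula.

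For the eigenfunction inequality \eqref{eq:eff-eigen2-diri}, I would substitute the test vector $g^2 \varphi$ into the first identity exactly as in the proof of Lemma \ref{lem:eff-eigen}. The same expansion
\begin{align*}
\left(\tfrac{g^2_{n+e_i}\varphi_{n+e_i}}{u_{n+e_i}} - \tfrac{g^2_n \varphi_n}{u_n}\right)\left(\tfrac{\varphi_{n+e_i}}{u_{n+e_i}} - \tfrac{\varphi_n}{u_n}\right) = u^{-1}_{n+e_i} u^{-1}_n \left(\tfrac{g_{n+e_i}\varphi_{n+e_i}}{u_{n+e_i}} - \tfrac{g_n \varphi_n}{u_n}\right)^2 - \tfrac{\varphi_{n+e_i}\varphi_n}{u_{n+e_i} u_n}(g_{n+e_i} - g_n)^2 \cdot u^{-1}_{n+e_i} u^{-1}_n
\end{align*}
combined with the eigenvalue equation $\ipc{g^2\varphi}{H^{\rm D}\varphi} = \mu \sum_n g_n^2 \varphi_n^2$ produces the Dirichlet analogue of \eqref{eq:eff-eigen}. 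Dropping the non-negative gradient-squared term, bounding $\varphi_{n+e_i}\varphi_n \le \tfrac12(\varphi_{n+e_i}^2 + \varphi_n^2)$, and reindexing the $\varphi_{n+e_i}^2$ contribution by the same Dirichlet shift as above gives \eqref{eq:eff-eigen2-diri}.

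The main subtlety to monitor is the handling of the inner boundary $\partial^\circ M$ in the reindexing step; this is precisely where the periodic proof used the identity $\varphi_n = \varphi_{n+Ke_i}$. The Dirichlet replacement is cleaner than it might first appear: because $\bar f$, $\bar g$, and $\bar u$ all vanish on $\partial M$, the shifted sums agree without any boundary term, and the restriction of all summations to edges $(n, n+e_i)$ with both endpoints in $M$ is automatic. No new estimates beyond those used in the periodic case are required, and I would therefore omit the detailed computations in the final write-up, as the lemma statement already announces.
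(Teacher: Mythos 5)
Your proposal is correct and coincides with the paper's own (omitted) argument: the paper simply asserts that Lemma \ref{lem:eff} and Lemma \ref{lem:eff-eigen} carry over to the Dirichlet case ``by setting the terms to be zero if $n+e_i\notin M$,'' which is exactly your zero-extension device, and your verification that the reindexing $\sum_{n}\bar g_n\bar f_{n-e_i}=\sum_{m}\bar g_{m+e_i}\bar f_m$ needs no boundary correction because $\bar f,\bar g,\bar u$ vanish on $\partial M$ is the only point of substance. The one blemish is the displayed expansion in your second step, which carries spurious factors of $u_{n+e_i}^{-1}u_n^{-1}$ on the right-hand side; the correct identity is $u_{n+e_i}u_n\,\nabla_i\tfrac{g^2_n\varphi_n}{u_n}\cdot\nabla_i\tfrac{\varphi_n}{u_n}=u_{n+e_i}u_n\bigl(\nabla_i\tfrac{g_n\varphi_n}{u_n}\bigr)^2-\varphi_{n+e_i}\varphi_n\left(g_{n+e_i}-g_n\right)^2$, exactly as in the proof of Lemma \ref{lem:eff-eigen}, and with that fixed the rest of your argument goes through verbatim.
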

With this lemma, one can obtain the Agmon type of estimates for the eigenfunction $\varphi$ of $H^{\rm D}$. The proof is exactly the same as for Theorem \ref{thm:Agmon}, except that one needs to replace the upper bound of $1/u_n\le V_{\max}$ by $V_{\max}+d$, as provided by \eqref{eq:u-lower-diri} in the Dirichlet case. 

For the high energy state, dual landscape works in the Dirichlet case exactly in the same way as for the periodic case. Moreover, following the computation in \eqref{eq:dualU}-\eqref{eq:dual-ev}, one can easily check that the zero boundary condition is preserved for the dual operator for any $M$. In other words, dual landscape in the Dirichlet case does not need the size of the domain $M$ to be even. For any $K\in \N$,  $(\mu,\varphi)$ is an eigen-pair for $H^{\rm D}$, iff $(4d+V_{\max}-\mu,\wt \varphi)$ is  an eigen-pair for $\wt H^{\rm D}$, where $\varphi,\wt \varphi$ are linked through \eqref{eq:dualU}  and the operator $\wt{H}^{\rm D}:=-\Delta^{\rm D}+V_{\max}-V$.

In summary, let $h_n$ be the Agmon distance defined as in \eqref{eq:Agmon-weight} for $H^{\rm D}$ associated to $u_n$ and let $\wt h_n$ be defined for the dual operator  $\wt H^{\rm D}$ associated to $\wt u_n$ as in \eqref{eq:Agmon-dis-dual}, one has 
\begin{theorem}
 Suppose $K\in \N,K\ge 3$. Let $H^{\rm D}=-\Delta^{\rm D}+V $ be as in \eqref{eq:Lap-diri}. Let $(\mu,\varphi)$ be an eigenpair of $H^{\rm D} $.  There is an absolute constant $C>0$ such that for all $0<\alpha < 1/\sqrt{Cd}$, the following holds true: for any $\delta>0$, if $0<\mu\le  V_{\max}-\delta$, then 
\begin{align}\label{eq:Agmon-loc-diri}
    \sum_{h_n\ge 1}e^{2\alpha h_n}\varphi_n^2\, \le\, \frac{C_2} {\delta} \sum_{ {n\in {M}} }\varphi_n^2
\end{align}
and if  $\mu\ge 4d+\delta$, then
\begin{align}\label{eq:Agmon-loc-dual-diri}
    \sum_{h_n\ge 1}e^{2\alpha \wt h_n}\varphi_n^2\, \le\, \frac{C_2} {\delta} \sum_{ {n\in {M}} }\varphi_n^2
\end{align}
where 
\begin{align}\label{eq:C1}
    C_2=\frac{4e^{2\alpha}d+(2+ 6C\alpha^2)e^{2\alpha} d\cdot (V_{\max}+d)}{1- Cd\alpha^2}
\end{align} 
\end{theorem}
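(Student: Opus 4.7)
The plan is to rerun the proofs of Theorems \ref{thm:Agmon} and \ref{thm:Agmon-dual} with the three Dirichlet analogues already assembled in the text: the Green identity \eqref{eq:Dform-Diri}, the sharpened landscape lower bound \eqref{eq:u-lower-diri}, and the uncertainty principle \eqref{eq:eff-eigen2-diri}. For the low energy estimate \eqref{eq:Agmon-loc-diri} I would apply \eqref{eq:eff-eigen2-diri} with the same cutoff test function used in the periodic proof, $g_n = h_n e^{\alpha h_n}$ if $h_n<1$ and $g_n = e^{\alpha h_n}$ if $h_n\ge 1$, and split the neighbour sum on the right according to whether each endpoint lies in $I_1=\{h_n\ge 1\}$ or $I_2=\{h_n<1\}$. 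Lemma \ref{lem:Agmon-gradient} controls each of the three resulting pieces exactly as before; edges with one endpoint in $\partial M$ contribute nothing, since \eqref{eq:eff-eigen2-diri} sums only over neighbours inside $M$. The sole quantitative change from the periodic case is that the a priori pointwise bound used for $(1/u_n-\mu)_+$ in the gradient estimate is now $V_{\max}+d$ rather than $V_{\max}$, by \eqref{eq:u-lower-diri}; feeding this through the arithmetic that produced $C_0$ in \eqref{eq:C0} yields $C_2$ in \eqref{eq:C1}, and restricting to $\{h_n\ge 1\}\subset\{1/u_n>\mu+\delta\}$ and dividing by $\delta$ completes \eqref{eq:Agmon-loc-diri}.

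For \eqref{eq:Agmon-loc-dual-diri} I would repeat the duality argument of Section \ref{sec:dual-LL} with $\wt\varphi_n=(-1)^{s(n)}\varphi_n$. The computation \eqref{eq:dualU}--\eqref{eq:dual-ev} shows that $\wt\varphi$ is an eigenfunction of $\wt H^{\rm D}=-\Delta^{\rm D}+(V_{\max}-V)$ with eigenvalue $\wt\mu=4d+V_{\max}-\mu\le V_{\max}-\delta$. The crucial observation, and the reason the parity restriction on $K$ that was needed in the periodic case disappears here, is that the pointwise sign flip preserves the vanishing of $\varphi$ on $\partial M$ for every $K$, so $\wt\varphi$ is automatically a Dirichlet vector in $\cH$. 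Since $\wt V\in[0,V_{\max}]$ is non-negative and not identically zero, the dual landscape $\wt u$ satisfies \eqref{eq:u-lower-diri} as well; applying the low energy bound just proved to $\wt H^{\rm D}$ and $\wt\varphi$ and using $\wt\varphi_n^2=\varphi_n^2$ converts it into \eqref{eq:Agmon-loc-dual-diri}.

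The only step that is not pure book-keeping from the periodic case is \eqref{eq:u-lower-diri} itself, which is where the $+d$ in the constant $C_2$ enters and which I regard as the main obstacle. In the interior $M\setminus\partial^\circ M$ the test function $f=u-\beta^{-1}\vec 1$ satisfies $(H^{\rm D} f)_n\ge 0$ for free since $(\Delta^{\rm D}\vec 1)_n=0$ there, but on $\partial^\circ M$ one has $-(\Delta^{\rm D}\vec 1)_n=k_n\in[1,d]$ counting the missing neighbours, so one must enlarge $\beta$ from $V_{\max}$ to $V_{\max}+d$ in order to keep $(H^{\rm D} f)_n\ge 0$ everywhere and then invoke the Dirichlet maximum principle. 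Once this replacement $V_{\max}\to V_{\max}+d$ has been identified, everything else in the proof is mechanical.
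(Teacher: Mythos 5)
Your proposal is correct and follows essentially the same route as the paper: run the periodic-case Agmon argument through the Dirichlet versions of the Green identity, the maximum principle with $\beta=V_{\max}+d$ (which is exactly where the $+d$ in $C_2$ originates), and the uncertainty principle restricted to edges inside $M$, then obtain the high-energy bound by the sign-flip duality, which preserves the zero boundary data for every $K$ and so removes the parity restriction. The paper states all of this only in sketch form, and your write-up supplies the same details it intends.
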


\section{More numerical experiments}\label{sec:num}
Our approach in Section \ref{sec:Agmon} is deterministic. In this part, we mainly present numerical experiments  
to verify our theoretical results. We consider the Schr\"odinger operators with disordered potentials for $d=1,2$
to show how the structure of wells and barriers of the effective 
potential $W_n=1/u_n$ can be used to predict the locations and approximate supports of localized eigenfunctions. The numerical behavior of the periodic case and the Dirichlet case  are similar.  We will show one example of the ground state for the 1-d periodic boundary condition first.

Our main result Theorem \ref{thm:Agmon} says that for an eigenpair $(\mu,\varphi)$, the component $\varphi_n$ is at most
of size $e^{c_1 h_n}$; with $h_n$ the effective distance from a union of effective potential wells
$J_\delta(\mu):=\left \{ {n\in {M}} :\ 1/{u_n}\le \mu+\delta \right\}$. 
For general deterministic potential, the theorem does not guarantee  the number or the size of  wells (connected component in $J_\delta(\mu)$) as well as the separation 
of the wells. In other words, $\varphi_n$ only decays insofar as the function $h_n$ grows. But numerical results have shown great performance, in which 
the growth of $h_n$ and the way it matches the decay of eigenfunctions are very evident. 
To illustrate this, we still consider the Bernoulli potential in Figure \ref{fig:BV}, based on which, we compare the location of the localized first eigenfunction 
$\phi^{(1)}$ (in green) and its relation to the effective potential $W_n=1/u_n$, displayed in Figure \ref{fig:BWphi1}. 

\begin{figure}[H]
	\centering
	\includegraphics[width=\textwidth]{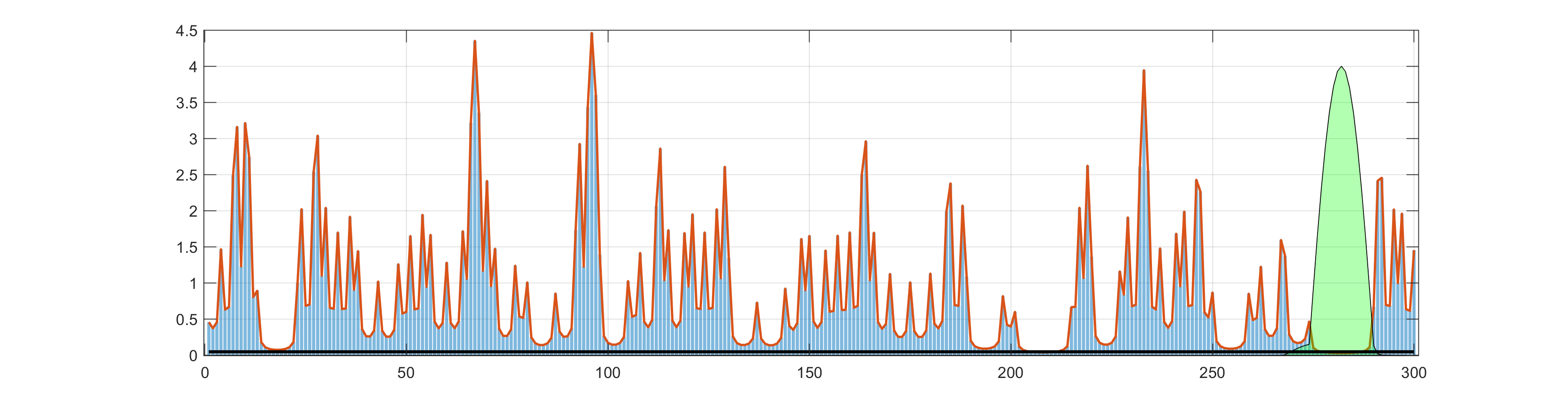}
	\includegraphics[width=\textwidth]{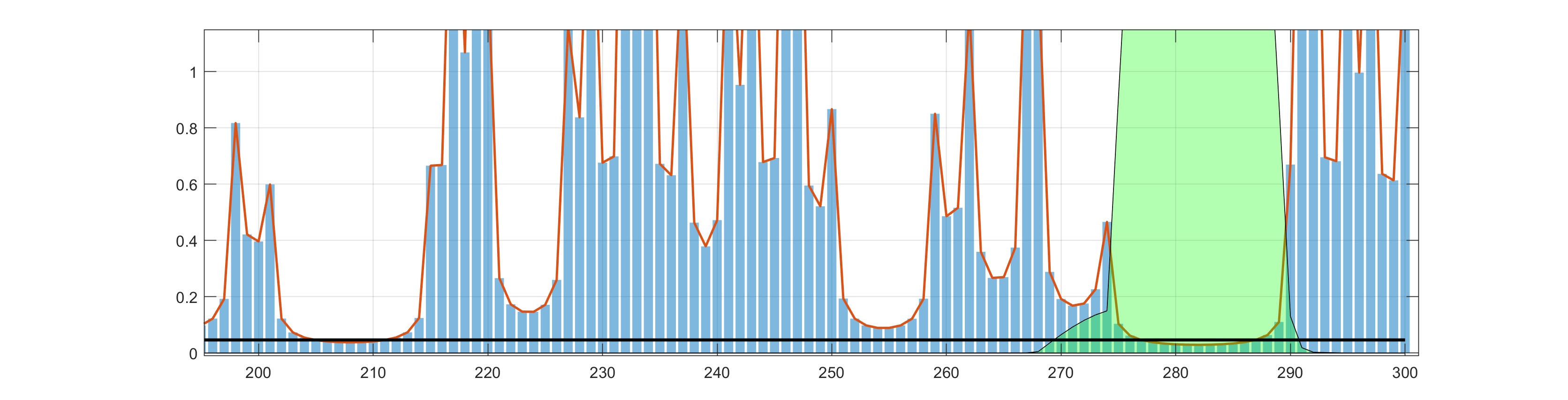}
	\caption{Effective potential $1/u_n$ (filled blue piecewise constant), and the effective wells $J_\delta(\mu_1)=\left \{ 1/u_n\le \mu_1+\delta \right\}$ with the first eigenfunction $\phi^{(1)}$ superimposed in green scale, for the operator in \eqref{eq:LapPeri} with the  Bernoulli random potential in Figure \ref{fig:1DBV}. The horizontal line segment near the bottom indicates the value $\mu_1+\delta$, where $\mu_1=0.0367$ and $\delta=0.01$.}
	\label{fig:BWphi1}
\end{figure}	
In Figure \ref{fig:BWphi1}, the horizontal line, $\mu_1+\delta$ 
determines a single well $J_\delta(\mu_1)=\left \{ 1/u_n\le \mu_1+\delta \right\}$.
Note that the the fundamental state is almost trapped in this single well actually, which shows the effectiveness of the landscape function.

\begin{figure}[H]
    \centering
    \includegraphics[width=\textwidth]{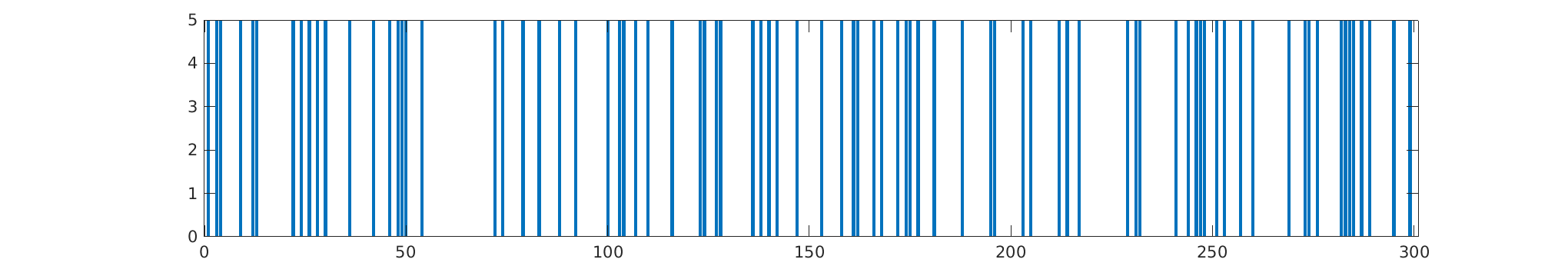}
    \caption{A different realization of i.i.d. $\{0,5\}$ Bernoulli random potential with  probabilities $70\%$ and $30\%$, over a lattice of size $300$.}
    \label{fig:1DBV}
\end{figure}

All the remaining numerical results will be for the Dirichlet boundary condition as in \eqref{eq:Lap-diri}. We start again with a different realization 1-d Bernoulli random potential over a lattice of size $K=300$, see Figure \ref{fig:1DBV}. We  consider the eigenpair in the middle other than the ground state this time. Actually, we plot 
the twelfth eigenstate $\phi^{(12)}$ in Figure \ref{fig:Bphi12} for the random potential in Figure \ref{fig:1DBV}. 
\begin{figure}[H]
	\centering
	\includegraphics[width=\textwidth]{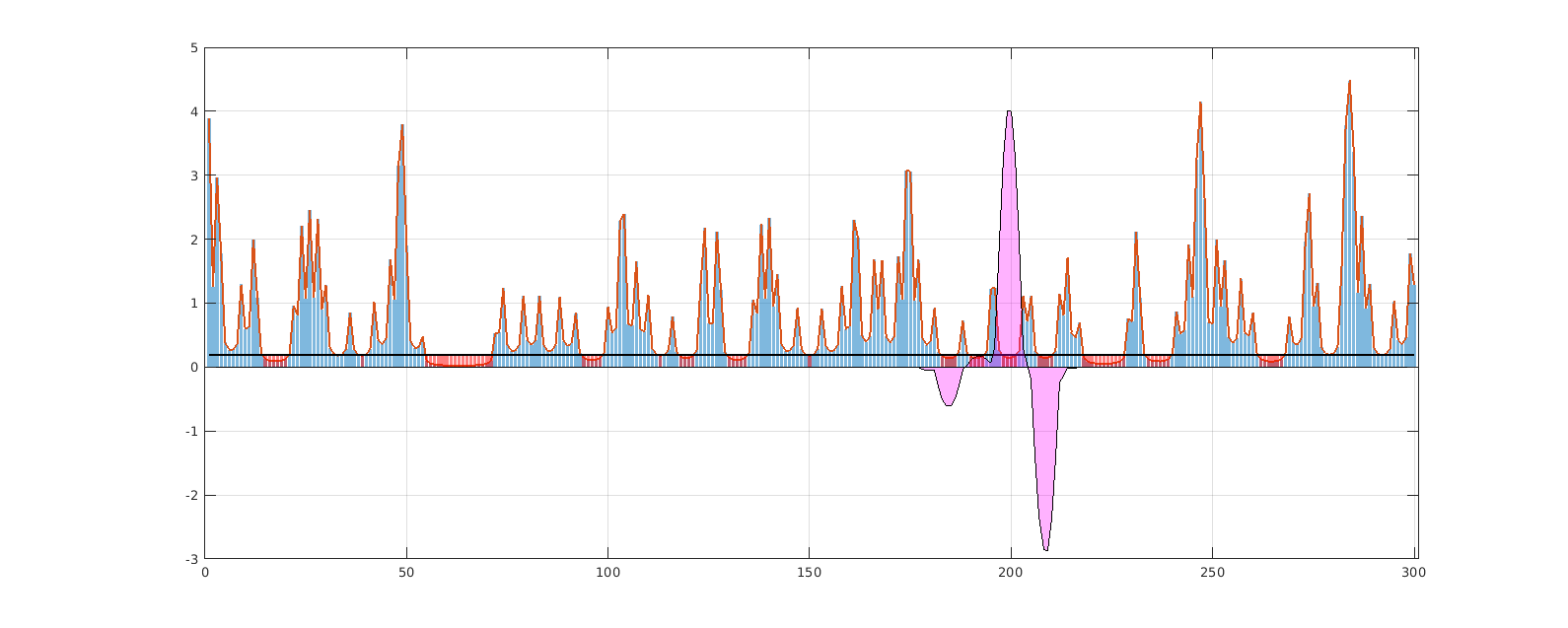}
	\caption{The effective wells $J_\delta(\mu_{12})=\left \{ 1/u_n\le \mu_{12}+\delta \right\}$ (filled red) with the twelfth eigenfunction $\phi^{(12)}$ superimposed in pink scale, for the Bernoulli random potential in Figure \ref{fig:1DBV}. The horizontal line segment indicates the value $\mu_{12}+\delta$, where $\mu_{12}=0.1805$ and $\delta=0.01$.}
	\label{fig:Bphi12}
\end{figure}
Different from the single well of the ground state in  Figure \ref{fig:BWphi1}, $J_\delta(\mu_{12})$ (in red) is now a union of several effective wells. 
The twelfth eigenfunction $\phi^{(12)}$ is superimposed in  pink scale.   
The result shows that  most of $\phi^{(12)}$ could also be captured by a cluster of components of $J_\delta(\mu_{12})$ effectively.  
In fact, dozens of eigenfunctions coincide essentially with single components or clusters of components.

Further, we  move to high energy modes using the dual landscape theory. In the following Figure \ref{fig:BWphi290},
the high energy state $\phi^{(290)}$ presents highly spatial oscillations, as predicted by 
the transform \eqref{eq:dualU}.  $\wt J_\delta(\mu_{290})$ here clearly signals the subregion
of localization, by a cluster of wells. 
\begin{figure}
	\centering
	\includegraphics[width=\textwidth]{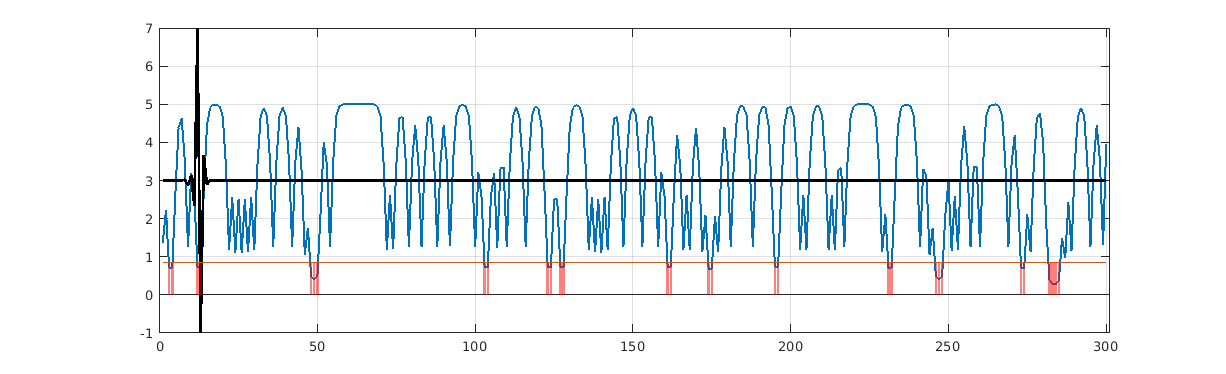}
	\caption{Localization and dual landscape near the top of the spectrum: The $290^{\rm th}$ eigenfunction $\phi^{(290)}$ is plotted in black for the same potential of Figure \ref{fig:1DBV}. For better
		visibility, the eigenstate is vertically shifted (three units up).The dual effective potential, plotted in blue scale, is given by the dual landscape $1/\wt u_n$.  The dual effective wells $\wt J_\delta(\mu_{290})=\left \{ 1/u_n\le \wt \mu_{290}+\delta \right\}$, are plotted in red. The horizontal line segment indicates the value $\wt \mu_{290}+\delta$,  where $\wt \mu_{290}=9-\mu_{290}=9-8.1670$ is the dual  and $\delta=0.01$.}
	\label{fig:BWphi290}
\end{figure}

Next, we will extend our numerical experiments to the uniformly random potential cases in both  one dimension and two dimension (we still consider the Dirichlet boundary condition).

We first consider the 1-d case and take the fourth eigenfunction $\phi^{(4)}$ as an example.
The uniform random potential $V$, with $V_{max}=5$ and $\phi^{(4)}$ are plotted in Figure \ref{fig:1DUnif}. 
Obviously, the the prediction of localized eigenfunction using $1/u_n$ is as effective as those in the Bernoulli case. 

\begin{figure}[H]
	\centering
	\includegraphics[width=\textwidth]{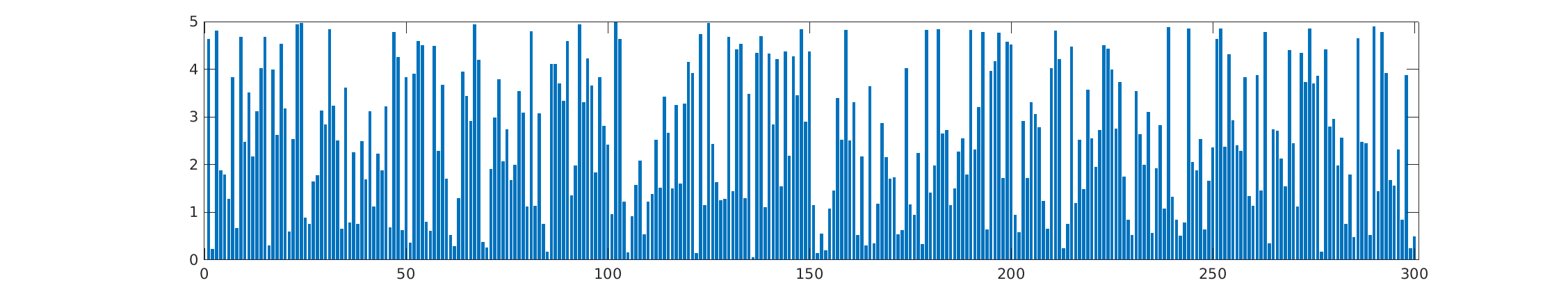}
	\includegraphics[width=\textwidth]{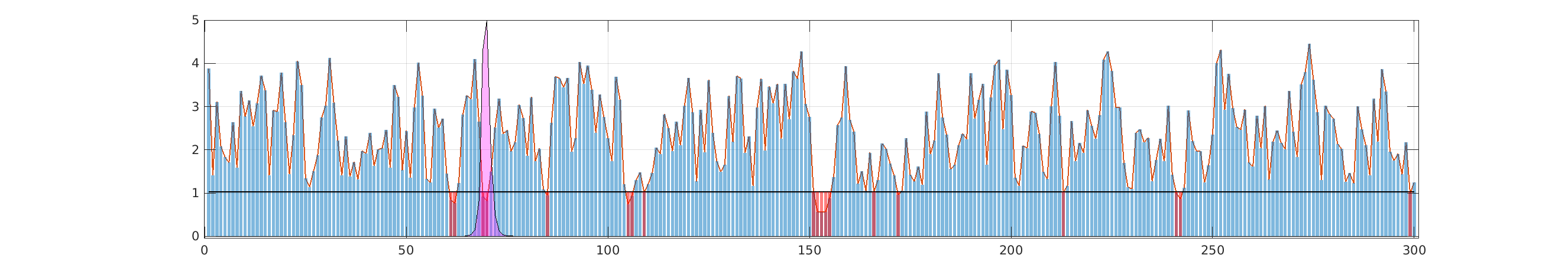}
	\caption{Top row: a random potential with uniform distribution over a lattice of size $300$. Bottom row: the effective wells $J_\delta(\mu_4)=\left \{ 1/u_n\le \mu_4+\delta \right\}$ (filled red) with the fourth eigenfunction $\phi^{(4)}$ superimposed in pink scale, for the random potential in the top row, where $\mu_4=1.0183$ and $\delta=0.01$.}
	\label{fig:1DUnif}
\end{figure}	

We also compare the separation of the effective wells for different $V_{\max}$. The $50^{\rm th}$  eigenfunction $\phi^{(50)}$ and the associated effective wells for uniform random potentials with different $V_{\max}$ are plotted  in Figure \ref{fig:sepWells}. It is clear that as $V_{\max}$ increases, the separation of the wells becomes more effective. 
\begin{figure}
	\centering
	\includegraphics[width=\textwidth]{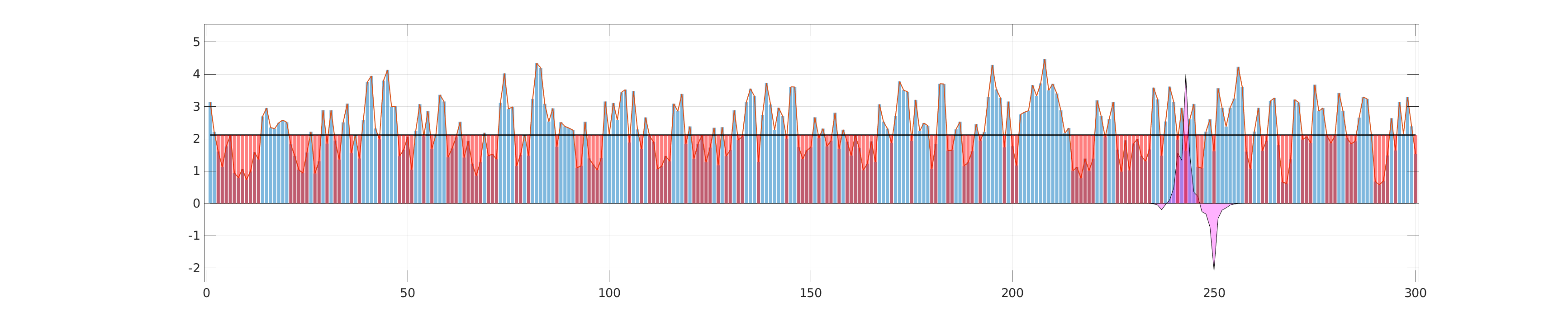}
	\includegraphics[width=\textwidth]{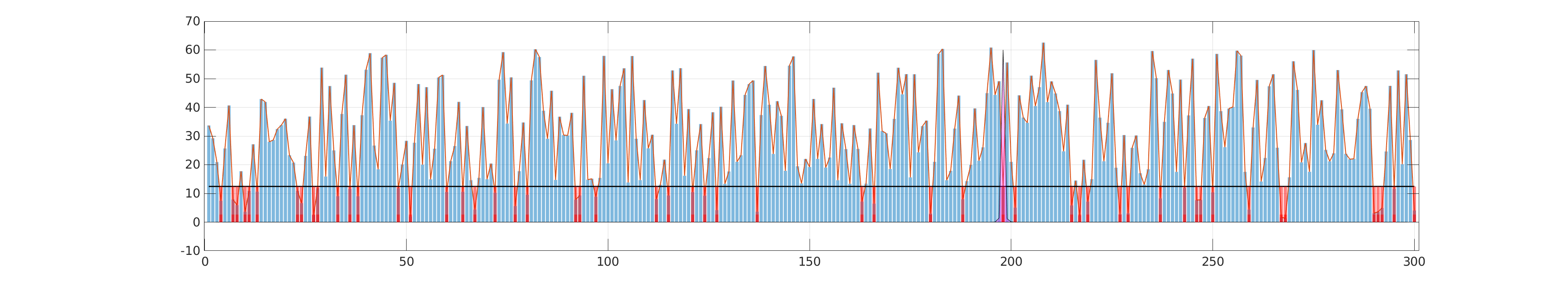}
	\caption{The $50^{\rm th}$  eigenfunction $\phi^{(50)}$ (superimposed in pink scale) and the associated effective wells (filled red) for uniformly random potentials with different $V_{\max}$. Top row: $V_{\max}=5$. Bottom row: $V_{\max}=64$.}
	\label{fig:sepWells}
\end{figure}

In 2-d case, we consider a square $M=\{1,\cdots,100\}\times \{1,\cdots,100\}\subset \Z^2$,
with a random potential $v_n$ chosen uniformly in $[0,5]$ for $n\in M$. 
We  compute  the effective potential $W_n=1/u_n$ for this random potential, see Figure \ref{fig:2dW}. 

\begin{figure}[H]
	\includegraphics[width=0.7\textwidth]{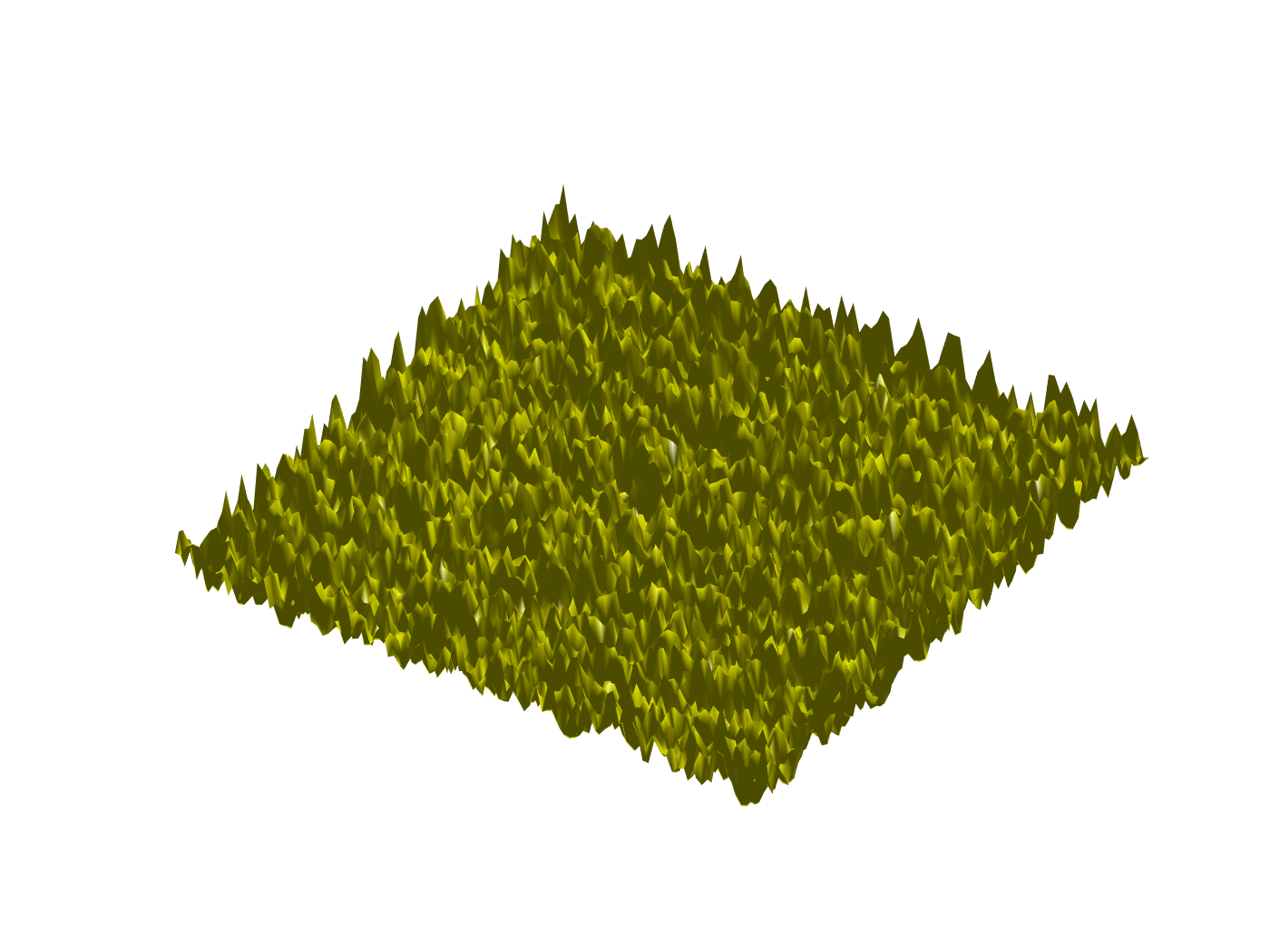}
	\caption{The effective potential associated to a  uniform random potential in $\Z^2$. It is shown with its crestlines which partition the domain into a few hundred basins of attraction surrounding wells.}
	\label{fig:2dW}
\end{figure}

\begin{figure}[H]
	\centering
	\subfigure[]{
		\includegraphics[width=0.5\textwidth]{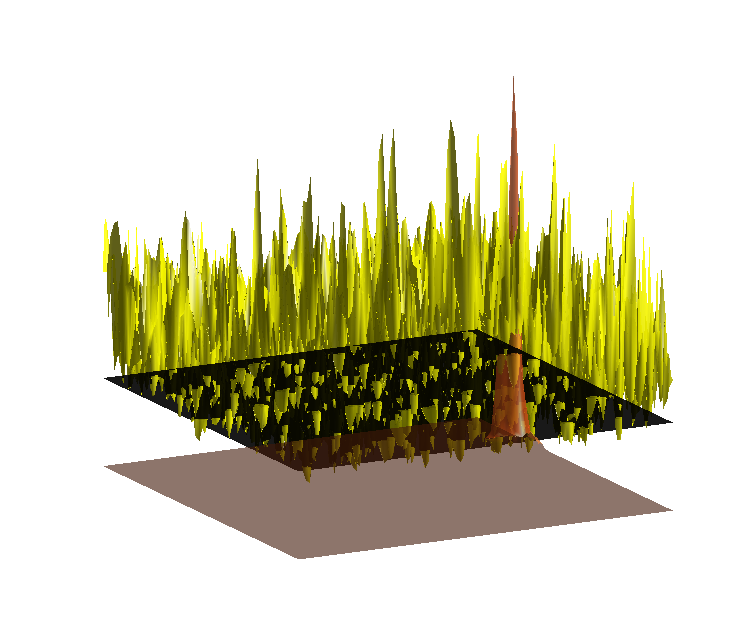}
	}
	\subfigure[]{
		\includegraphics[width=0.4\textwidth]{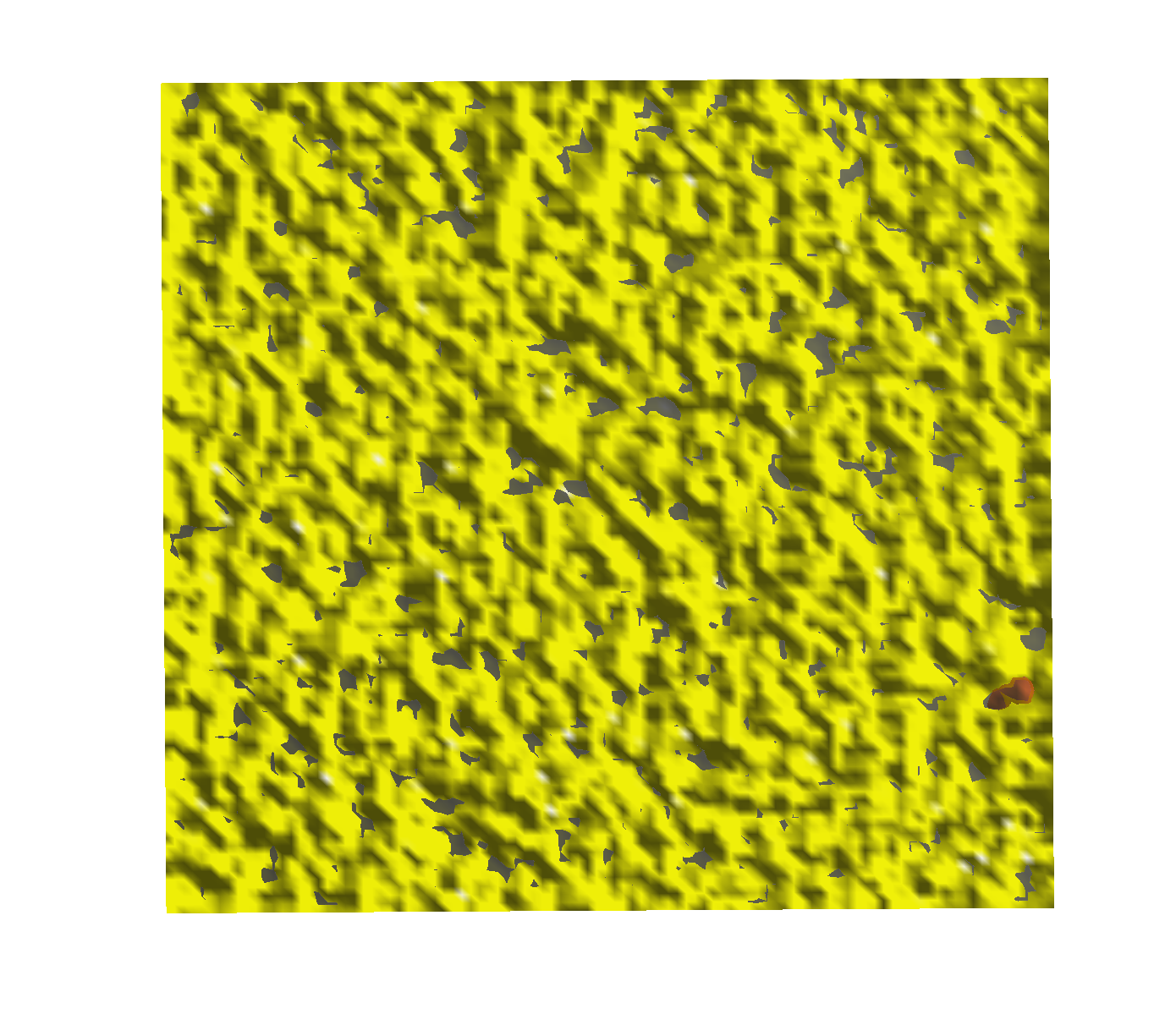}
	}
	\caption{The effective wells (in gray scale in (b)) $J_\delta(\mu_1)=\left \{ 1/u_n\le \mu_1+\delta \right\}$ with first eigenfunction superimposed in orange scale, where $\mu_1=1.3799$ and $\delta=0.05$.}
	\label{fig:2Dcon}
\end{figure}

In the Figure  \ref{fig:2Dcon} (a), 
it can be seen obviously that the first, localized eigenfunction $\phi^{(1)}$, which is plotted in orange scale, can be captured by one of the effective wells $J_{\delta}(\mu_1)$.
Figure \ref{fig:2Dcon} (b) also shows the contour of $J_{\delta}(\mu_1)$ with the first eigenfunction $\phi^{(1)}$.

\vskip 0.08 in


\Addresses

\end{document}